\providecommand{\tabularnewline}{\\}
\providecommand{\algorithmname}{Algorithm}
\theoremstyle{plain}
\newtheorem{thm}{\protect\theoremname}
  \theoremstyle{plain}
  \newtheorem{prop}[thm]{\protect\propositionname}
\date{}
  \providecommand{\propositionname}{Proposition}
\providecommand{\theoremname}{Theorem}
\begin{document}

\title{A Decomposition Algorithm to Solve the Multi-Hop Peer-to-Peer Ride-Matching
Problem}

\maketitle
\begin{center}
Transportation Research Part B: Methodological\\
May 2017\\
\end{center}

\vspace{2pc}
\textbf{Neda Masoud }

Assistant Professor

Department of Civil and Environmental Engineering

University of Michigan Ann Arbor, Ann Arbor, MI, USA, 48109 

nmasoud@umich.edu 

Corresponding author \\

\textbf{R. Jayakrishnan }

Professor 

Department of Civil and Environmental Engineering

University of California Irvine, Irvine, CA, USA, 92697

rjayakri@uci.edu 

\newpage{}
\begin{abstract}
\noindent In this paper, we mathematically model the multi-hop Peer-to-Peer
(P2P) ride-matching problem as a binary program. We formulate this
problem as a many-to-many problem in which a rider can travel by transferring between multiple drivers, and a driver can carry multiple riders. We propose
a pre-processing procedure to reduce the size of the problem, and devise a decomposition algorithm to solve the original ride-matching
problem to optimality by means of solving multiple smaller problems.
We conduct extensive numerical experiments to demonstrate the computational
efficiency of the proposed algorithm and show its practical applicability
to reasonably-sized dynamic ride-matching contexts. Finally, in the
interest of even lower solution times, we propose heuristic solution
methods, and investigate the trade-offs between solution time and
accuracy.
\end{abstract}

\section{Introduction}

Recent advances in communication technology coupled with increasing
environmental concerns, road congestion, and the high cost of vehicle
ownership have directed more attention to the opportunity cost of
empty seats traveling throughout the transportation networks every
day. Peer-to-peer (P2P) ridesharing is a good way of using the existing
passenger-movement capacity on the vehicles, thereby addressing the
concerns about the increasing demand for transportation that are too
costly to address via infrastructural expansion.

Although limited versions of P2P ridesharing systems initially emerged
in the US in the 1990s, they did not receive enough support from the
targeted population to continue operating. Inadequate and non-targeted
marketing, insufficient flexibility and convenience, and absence of
appropriate technology were some of the factors that contributed to
the lack of success in implementation of the first generation
of ridesharing systems. 

A second generation of ridesharing systems emerged after considerable
improvements in communications technology in the past few years. Making
use of GPS-enabled cell phones in more recent ridesharing systems
allows for accessing online information on the location of participants,
making ridesharing more accessible and providing people with a sense
of security. These factors combined with the high cost of travel (both
financial and environmental) have played an important role in the
higher levels of interest in ridesharing systems in the recent years. The high
growth rate of Transportation Network Companies (TNC) such as Uber
and Lyft in the USA and elsewhere in the world is an indicator of
increasing levels of acceptance in the concept of outsourcing rides.

In line with the demand side, the supply side of P2P ridesharing has
experienced growth as well. According to the 2009 national household
travel survey (NHTS), out of an average of 4 seats available in a
vehicle, only 1.7 is being actually used (Figure \ref{Fig:Occupancy}).
This number is as low as 1.2 for work-based trips. In addition, number of
trips per household in the US has been experiencing a decreasing trend since
1995. On the contrary, the general trend in the number of vehicles
owned by households has been increasing. These statistics suggest
that carpooling in households is declining, and that the number of
empty seats available on traveling vehicles is increasing. 

This increase on the supply side of ridesharing, coupled with the
rise on the demand side, imply an optimistic future for P2P ridesharing
services. This potential was recognized by the US congress in June
2012. Section 1501 of the Moving Ahead for Progress in the 21st Century
(MAP-21) transportation act expanded the definition of ``carpooling''
to include ``real-time ridesharing'' as well, making ridesharing eligible for  the federal funds that were previously available only for carpooling projects.

In this paper, we define P2P dynamic ridesharing to include all one-time
rideshares with any type of arrangement, whether it is on-the-fly
or pre-arranged, between peer drivers and riders. Dynamic ridesharing
differs from more traditional carpooling services in that in carpooling
shared trips are scheduled for an extended period of time, and are
not one time occurrences. Furthermore, the nature of carpooling programs
does not ask for real-time ride-matching. 

Drivers in a ridesharing system drive to perform activities of their
own, and not for the mere purpose of transporting riders. Each driver
can have multiple riders on board at any point in time. In addition,
to increase the number of served rider requests, the system provides
multi-hop itineraries for riders, where riders may transfer between
vehicles. 

We define a set of stations in the network where riders and drivers
can start and end their trips, and riders can transfer between vehicles.
The system finds matches for riders by optimally routing drivers in
the network. In order to guarantee a high quality of service, both
riders and drivers provide a time window to specify the start and
end of their trips, and a maximum ride time. Furthermore, riders can
specify the maximum number of transfers they are willing to make,
and drivers can put a limit on the number of riders they want to have
on board at each moment in time. The term ``real-time''
emphasizes the capability of the system to make ride-matches in a
short period of time, for implementation with frequent re-optimizations
using newer data over time. 

A P2P ride-matching algorithm is central to successful implementation
of a ridesharing system. Ride-matching refers to the problem of matching
riders (passengers) and drivers in a ridesharing system. A successfully
matched rider receives from the system an itinerary of his/her trip
that includes information on the scheduled route, and the drivers
with whom the travel is planned. Drivers receive itineraries that
include the schedules to pick up and drop off riders. 

\begin{center}
\begin{figure}
\begin{centering}
\includegraphics[width=5in]{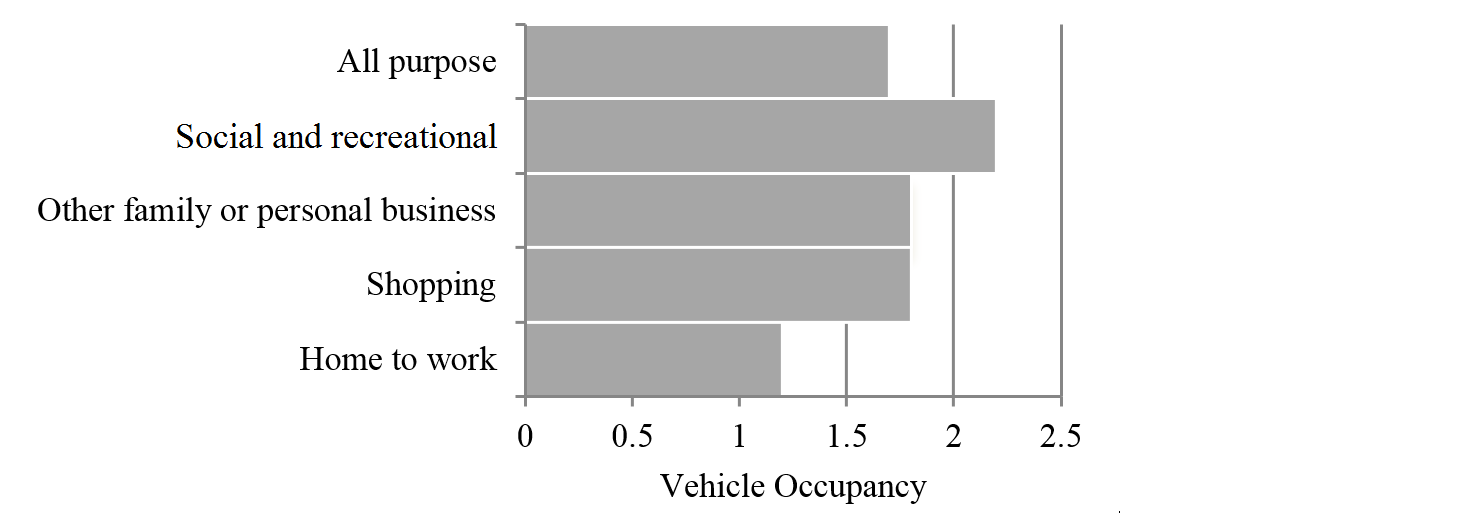}
\par\end{centering}
\caption{Average vehicle occupancy in the US in 2009 for different trip purposes
(\citealp{NHTS})}
\label{Fig:Occupancy}
\end{figure}
\par\end{center}

\section{Literature Review\label{sec:Literature-Review}}

The Multi-hop P2P ride-matching problem can be formulated as a special
case of the general pick-up and delivery problem (GPDP). GPDP consists
of devising a set of routes to satisfy transportation requests with
given loads, and origin/destination locations. Vehicles that operate
these routes each have a certain origin, destination, and capacity
(\citet{1995_GPDP}). 

The dial-a-ride problem (DARP) is a special case of the GPDP, where
all vehicles share the same origin and destination depot, and the
loads to be transported are people. Although DARP is usually used
in systems that aim at transporting elderly or handicapped people,
this problem is very close to the ride-matching problem in ridesharing
systems. 

In its basic form, DARP considers a depot where a fleet of homogeneous
vehicles start their trips in the morning, and to which they return
at the end of their shifts. Each passenger is assumed to make the
entire trip in the same vehicle, i.e., the possibility of transfers
between vehicles is not considered. Variants of DARP that are more
application-friendly consider time windows for the pick-up and delivery
of passengers (\citealp{DARPTW1,DARPTW2}). \citet{2007_Multi_Depot1,Overview_DARP_2007} provide an overview of the literature on DARP. 

In reality, the problem of transporting passengers is often more complex
than the basic form of DARP. Some agencies have their fleet located
at stations throughout their operating area. This has motivated the
development of the multi-depot formulation for DARP (MD-DARP) (\citealt{2007_Multi_Depot1}).
Recently, \citet{MD_H_DARP1} and \citet{MD_H_DARP2} have added heterogeneity
into the mix, and studied the multi-depot heterogeneous DARP (MD-H-DARP).
These studies include heterogeneity among vehicles (multiple depots,
capacity, level of service, and operating costs) as well as passengers
(level of care, accompanying individuals, required resources, and
total number of passengers). 

An additional degree of flexibility that has recently been added to
the original DARP is the possibility for passengers to transfer between
multiple vehicles/modes of transportation, leading to the emergence
of the DARP with transfers (DARPT). \citeauthor{DARP_Transfer1}~(2014a),
\citet{DARP_Transfer2}, and \citet{DARP_transfer_bimodal} are the
only papers that have studied this variant of the problem, to the
best of our knowledge. \citeauthor{DARP_Transfer1}~(2014a) limit
the number of potential transfers to one. \citet{DARP_Transfer2}
does not put a constraint on the capacity of vehicles, and works with
demand at an aggregate level, rather than the individual passengers'
travel desires. \citet{DARP_transfer_bimodal} use heuristic algorithms
to propose multi-modal routes to para-transit users. In their study,
they try to route para-transit vehicles to carry passengers from their
homes to bus stops, and from bus stops to their destinations. 

The P2P ride-matching problem has attracted attention in academia
only in the very recent years. Ride-matching problems share some of
the characteristics of the more advanced DARPs, such as multiple depots
and heterogeneous vehicles and passengers. Drivers in ridesharing
systems are traveling to perform activities, and have distinct origin
and destination locations (multi-depot), different vehicle capacities
(heterogeneity), and rather narrow travel time windows. These factors
can lead to the matching problems in ridesharing systems being spatiotemporally
sparse, in general. One characteristic that differentiates the ride-matching
problem from DARP is the fact that the set of vehicles in a ridesharing
system is neither fixed (i.e., not a certain fleet size is available
on a regular basis) nor deterministic (i.e., the system does not know
in advance the time windows and origins and destinations of drivers'
trips). In addition, drivers who make their vehicles available in
ridesharing systems are peers to the passengers who are looking for
rides, and therefore measures of quality of service that are reserved
only for passengers in DARPs should be extended to drivers as well
in ridesharing systems.

\citet{agatz2012optimization} and \citet{furuhata2013ridesharing}
classify ride-sharing systems based on different criteria, and discuss
the challenges ridesharing systems face. In its simplest form, the
ride-matching problem matches each driver with a single rider. This
can be modeled as a maximum-weight bipartite matching problem that
minimizes the total rideshare cost (\citealp{Agatz_Rider_or_Driver}). 

There are also ride-matching problems that are more complex and try
to take advantage of the full unused capacity of vehicles by allowing
multiple riders in each vehicle. This form of ridesharing is similar
to the carpooling problem where a large employer encourages its employees
to share rides to and from work (\citealp{carpool_exact} and \citealp{2004_carpool_hueristic}).
The taxi-sharing problem, as formulated by \citet{hosni2014shared},
also tries to reduce the cost of taxi services by having people share
their rides. \citet{Herbawi1} have studied the problem of matching
one driver with multiple riders in the context of ridesharing, and
have proposed non-exact evolutionary multi-objective algorithms. \citet{Febbraro}
formulate an optimization problem to model the many-to-one ridesharing
systems (in which each rider is paired with only one driver, though each driver can carry multiple riders), and use optimization engines to solve it. \citet{stiglic2015benefits} manage to increase the number of served riders by having riders walk to meeting points, where multiple riders can be picked up by a driver. The number of stops for each driver, however, is limited to a maximum of two.

\citet{Herbawi_1RMD_1,Herbawi_1RMD_2} model another variant of the ride-matching problem in which a single rider can travel by transferring between multiple drivers. They propose a genetic algorithm to solve this one-to-many matching problem. \citeauthor{masson2014optimization}~(2014b)
study a similar problem in a multi-modal environment, where goods are carried using a combination of excess bus capacities and city freighters. They propose an adaptive large neighborhood heuristic algorithm to solve the problem. \citet{coltin2014ridesharing} show the fuel efficiency that including transfers in the riders' itineraries can offer, using three heuristic algorithms.

Many-to-many matching problems allow drivers to have multiple passengers on board at each point in time, and riders to transfer between drivers \citep{Agatz}. \citet{cortes2010pickup} were the first to formally
formulate a many-to-many pick-up and delivery problem. They introduced an exact branch-and-cut solution method. The largest example they solved, however, consisted of 6 requests, two vehicles, and one transfer point. To the best of our knowledge, there are only two studies that model many-to-many ridesharing systems. \citet{Agatz} is one of the
first to take an optimization approach toward modeling many-to-many ridesharing systems. In their study, the authors discuss modeling multi-modal ridesharing systems that allow for transfers between different modes of transport. However, they do not discuss a solution methodology.
\citet{Ghoseiri} formulates a mixed integer problem (MIP) to model the many-to-many ridesharing system. Their proposed solution heuristic limits the number of transfers to a maximum of two. 

In addition to multi-hop ridesharing, a solution methodology that
can handle unlimited number of transfers can be used to optimize multi-modal
transportation networks, where, for example, ridesharing is combined
with public transportation. In such a scenario, each public transport
line acts as a driver with a fixed route in the ridesharing system.
In case of a multi-modal network, allowing for higher numbers of transfers and devising algorithms that can accomplish routing and scheduling of passengers in real-time become an integral part of the system.

In this paper, we model a multi-hop P2P ridesharing system as a binary
optimization problem, and propose an efficient algorithm to solve
the corresponding ride-matching problem to optimality. Our model allows
drivers to carry multiple riders at the same time, and riders to transfer
between vehicles, and/or different modes of transportation. This leads
to higher system performance, which can, to some degree, compensate
for the inherent spatiotemporal sparsity in ridesharing systems. In
contrary to some ridesharing systems who either assume a given route
for drivers, or allow for only limited detours from a given route,
we leave the routing of the drivers to the system as a default, unless
drivers specifically ask to take certain routes. In order to ensure
that the higher performance of the system does not come at the cost
of poor quality of service, we require both drivers and riders to
have specified a maximum ride time, and for riders to have set a limit
on the number of transfers. To ensure that rides can be successfully
accomplished within the requested time windows specified by participants,
the formulation is devised to use time-dependent travel time matrices.

The contributions of this paper to the literature are three-fold;
we formulate the many-to-many ride-matching problem as a binary program
in a time-expanded network, introduce a pre-processing procedure to
reduce the size of the proposed optimization problem, and devise a
decomposition algorithm to solve the large-scale ride-matching problem
to optimality by means of iteratively solving smaller problems. The combination
of the three aforementioned methodological contributions allows for
the ride-matching problem to be solved in a very short period of time,
enabling dynamic implementation of multi-hop ridesharing. 

In the rest of the paper, we first introduce the ridesharing system
to provide the context in which we need to solve the ride-matching
problem. The rest of the paper focuses on formulating and solving
the ride-matching problem. We start by formulating the multi-hop P2P
ride-matching problem as an optimization problem. Since solving this problem directly using optimization engines is computationally prohibitive, we introduce a pre-processing procedure to limit the
size of the input sets to the optimization problem, and a decomposition
algorithm to solve the problem more efficiently in terms of computing
time. Finally we address the scalability of the problem and real-time implementation strategies for the underlying system, and introduce heuristics to speed up reaching a high quality solution.

\section{Ridesharing System\label{sec:Ridesharing-System}}

The ridesharing system defined in this paper contains a set of participants
$P$. These participants are divided into a set of riders, $R$, who
are looking for rides, and a set of drivers, $D$, who are willing
to provide rides ($P=R\cup D$). Drivers may have different incentives
to participate in the ridesharing system, including monetary compensation,
using high occupancy vehicle (HOV) lanes, or reduced-cost parking,
among others. Different driver incentives for ridesharing can translate
into different objective functions for the corresponding ride-matching
problem. 

To facilitate pick-ups and drop-offs, a set of stations, $S$, are
identified in the network. Stations are pre-specified locations where
participants can start and end their trips, and riders can switch
between drivers and/or to other modes of transport, such as transit.
Strategic identification of stations is central to the performance
of the system. Lessons learned from the previous P2P ridesharing systems
suggest that it is better for riders to be picked up/dropped off at
pre-specified locations, rather than their homes (or the exact location
where their trips start/end) for two reasons (\citealp{heinrich2010implementing}).
First, these locations could be hard to find for drivers, and therefore
riders could miss their scheduled rides. In addition, drivers could
have a hard time finding an appropriate location to park their vehicles.
Second, some drivers and riders would understandably be reluctant
to reveal their home address to others. In addition, as shown by \citet{stiglic2015benefits},
introducing stations can increase the number of successful matches, as a result of increasing the spatial proximity between trips.

Every active participant $p$ of the rideshare system provides to
the system their origin and destination stations ($OS_{p}$ and $DS_{p}$, respectively), the earliest acceptable time to depart from the origin
station, $T_{p}^{ED}$, the latest acceptable time to arrive at the
destination station, $T_{p}^{LA}$, and their maximum ride time, $T_{p}^{TB}$.
Subsequently, the travel time window for participant $p$ can be defined as $TW_{p}=[T_{p}^{ED},T_{p}^{LA}]$.
In addition, each participant is asked to provide a notification deadline
by which they need to be informed of any matches made for them. Figure \ref{fig:story} demonstrates some of the parameters of the system.

\begin{figure}[H]
\begin{centering}
\includegraphics[width=4in]{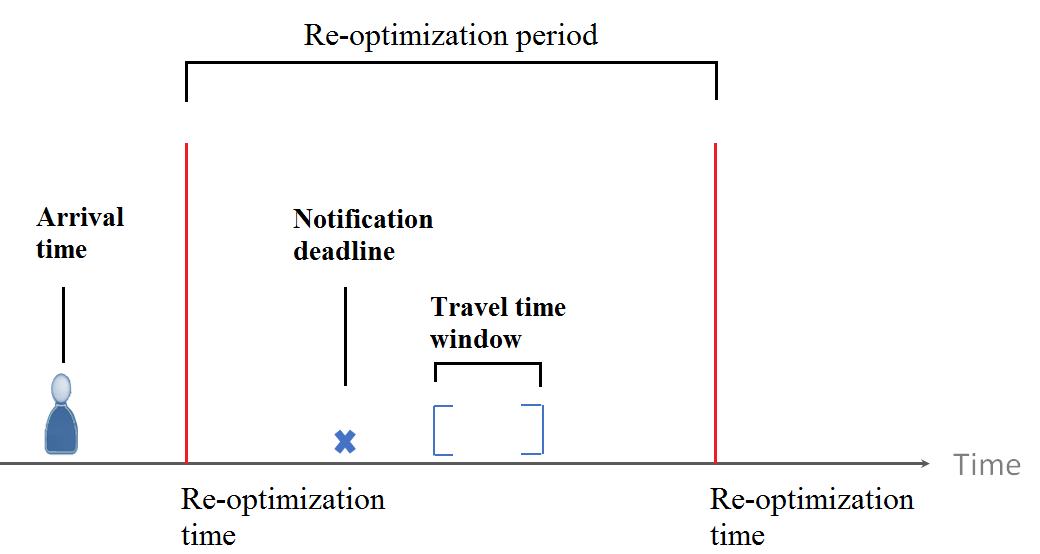}
\par\end{centering}
\caption{System parameters\label{fig:story}}
\end{figure}

Drivers should announce the capacity of their vehicles, $C_{d}$.
This could simply be the physical vehicle capacity, or the
maximum number of riders a driver is willing to carry in their vehicle
at any point in time. Riders can specify the maximum number of transfers
(change of vehicles), $V_{r}$, they are willing to make to get to
their destinations. 

We discretize the study time horizon into an ordered set of indexed
time intervals of small duration, $\Delta t$, to allow for using
time-dependent travel-time matrices. We define set $T_{p}$ to contain
the indices for all time intervals within the range $TW_{p}=[T_{p}^{ED},T_{p}^{LA}]$
for participant $p$.

In a system discretized in both time and space, we define a node $i$,
$n_{i}$, as a tuple $(t_{i},s_{i})$, where $t_{i}$ is the time
interval one may be located at station $s_{i}$. Subsequently, a link
$\ell$ is defined as $(n_{i},n_{j})=(t_{i},s_{i},t_{j},s_{j})$,
where $t_{i}$ is the time interval one has to leave station $s_{i}$,
in order to arrive at station $s_{j}$ during time interval $t_{j}$.
We generate links between neighboring stations only, i.e., a link exists
between stations $s_{i}$ and $s_{j}$ if traveling from $s_{i}$
to $s_{j}$ does not require passing through another station. Naturally,
the links are determined by the travel time matrix, which could be
time-dependent, and travel times used must ensure that a traveler
leaving at the very end of the time interval of $t_{i}$ can arrive
at least by the very end of the time interval $t_{j}$. We denote
the set of links by $L$. 

Figure \ref{fig:Network_exp} demonstrates an example of a link in
a time-expanded network. If a participant leaves station 1 at time
interval $1$, they arrive at station $2$ at time interval $2$.
The starting and ending nodes of this trip are $n_{s}=(t_{s},s_{s})=(1,1)$, and $n_{e}=(t_{e},s_{e})=(2,2)$, respectively. The resulting link
is $\ell=(t_{s},s_{s},t_{e},s_{e})=(1,1,2,2)$. In this figure we
can also see link $(2,2,3,4)$. These two links together form
a path that leaves station $1$ at time interval $1$, and arrives
at station $4$ at time interval $3$.

\begin{figure}
\begin{centering}
\includegraphics[width=3in]{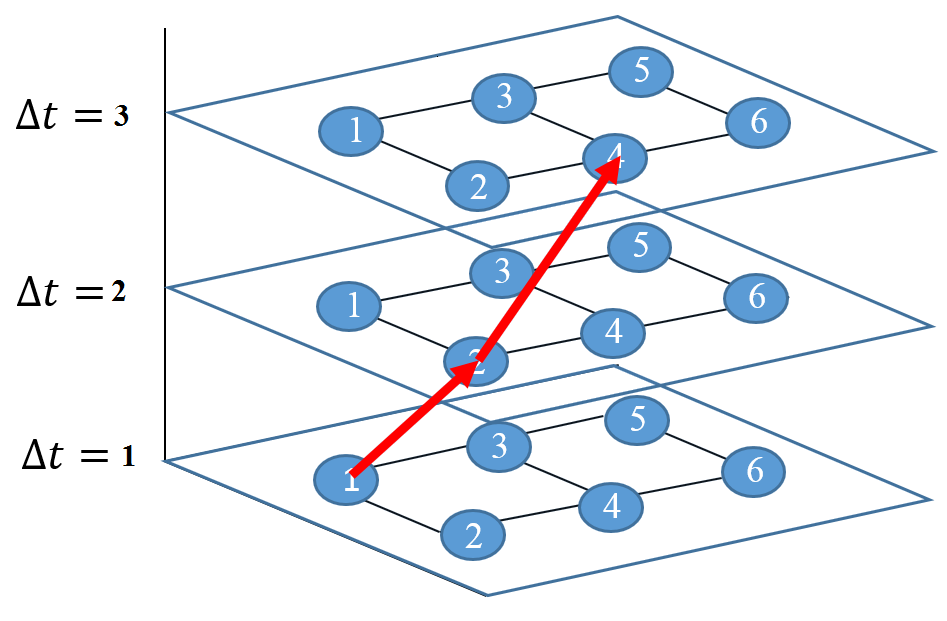}
\par\end{centering}
\caption{Example of links in a time-expanded network\label{fig:Network_exp}}
\end{figure}

The goal of the ridesharing system in this paper is to maximize the
matching rate (although the proposed methodology can be used for a variety of objective functions.) For now, let us assume that drivers leave the route
choice to the system. This does not preclude the case of drivers who
want to follow their own fixed routes, as those routes can be specified
as successive nodes and entered into the formulation as fixed parameters.

We include a dummy driver, $d^{\prime}$, in the set of drivers, and
form the set $D^{\prime}=D \cup \{d^{\prime}\}$. The dummy driver
has neither a real origin or destination, nor a travel time window.
The motivation behind introducing the dummy driver will be explained
in the following section.

The implementation strategy of a ridesharing system should be determined based on the nature of customer arrivals. Here, we adapt a general \textit{rolling time horizon} framework in which the system is re-optimized periodically at pre-specified points in time called ``re-optimization times''. We call the time window between two consecutive re-optimization times the ``re-optimization period'' (Figure \ref{fig:story}). Re-optimizing the system allows for enhancing the reliability of the system and improving its performance by incorporating the latest information on link travel times and customer arrivals, respectively. In a rolling time horizon framework, at each re-optimization time the system solves a matching problem that includes all registered participants whose notification deadlines are after the re-optimization time. In addition, all previously matched drivers who have not finished their trips at the re-optimization time can be included in the matching problem as well, albeit if they have empty seats left, and considering their prior commitments. After solving the problem at each re-optimization time, the itineraries of matched drivers and riders whose notification deadlines lie within the first re-optimization period will be fixed and announced to them. 

As previously mentioned, the rolling time horizon framework is a general framework whose parameters (i.e., the re-optimization period and re-optimization times) can be calibrated based on the distribution of customer arrivals. In an extreme case where all system participants register their trips before the onset of the planning horizon (e.g., before the start of a morning peak hour), the general rolling time horizon framework turns into a \textit{static} framework. On the other side of the spectrum, if all participants arrive in real-time (i.e., their arrival times and notification deadlines are less than a few minutes apart), the rolling time horizon framework can be used to serve such a \textit{dynamic} system by selecting very small re-optimization periods (e.g., 1-min periods).

\section{The Multi-hop P2P Ride-Matching Problem}

The role of the ride-matching problem is to devise itineraries that
can take riders to their destinations by optimally routing drivers.
Itineraries have to comply with the riders' specified
maximum number of transfers, the capacity of drivers'
vehicles, and the participants' travel time windows and maximum
ride times. The ride-matching problem will devise itineraries for
the matched riders in the system, and all drivers, matched or not. 

To mathematically model the ride-matching problem, we use four sets
of decision variables, as defined in (\ref{eq:var_Xd})-(\ref{eq:Var_U}). 

\begin{align}
x_{\ell}^{d} & =\begin{cases}
1 & \mbox{Driver \ensuremath{d} travels on link \ensuremath{\ell}}\\
0 & \mbox{Otherwise}
\end{cases}\label{eq:var_Xd}\\
y_{\ell}^{rd} & =\begin{cases}
1 & \mbox{Rider \ensuremath{r} travels on link \ensuremath{\ell} with driver \ensuremath{d}}\\
0 & \mbox{Otherwise}
\end{cases}\label{eq:var_Xrd}\\
z_{r} & =\begin{cases}
1 & \mbox{Rider \ensuremath{r} is matched}\\
0 & \mbox{\mbox{Otherwise}}
\end{cases}\label{eq:Var_Y}\\
u_{r}^{d} & =\begin{cases}
1 & \mbox{Driver \ensuremath{d} contributes to the itinerary for rider \ensuremath{r}}\\
0 & \mbox{\mbox{Otherwise}}
\end{cases}\label{eq:Var_U}
\end{align}

Equation (\ref{eq:Obj}) presents the objective function of the problem.
A ride-matching problem can have various objectives, ranging from
maximizing profits to minimizing the total miles/hours traveled in
the network. This objective can vary depending on the nature of the
agency that is managing the system (public or private), the level
of acceptance of the system in the target community, and the ridesharing
incentives. For a ridesharing system in its infancy, it is logical
to maximize the matching rate. We use this objective for the ridesharing
system in this paper. 

The first term in (\ref{eq:Obj}) maximizes the total number of served
riders, while the second term minimizes the total number of transfers
in the system, and is added only for the purposes described in Proposition
1 (in the Appendix). The weight $W_{r}$ should be set to a proper
value (any value smaller than $\nicefrac{1}{V_{r}}$, where $V_{r}$
is the maximum number of transfers rider $r$ is willing to make)
to ensure that serving the maximum number of riders remains the primary
objective of the system. 

The sets of constraints that define the ridesharing system are presented
in (\ref{eq:d_orig})-(\ref{eq:transfer}). Constraint sets (\ref{eq:d_orig})-(\ref{eq:d_balance})
route drivers in the network. Constraint set (\ref{eq:d_orig}) directs
drivers in set $D$ out of their origin stations, and (\ref{eq:d_dest})
ensures that they end their trips at their destination stations. Note
that we do not use a separate set of constraints to enforce the travel
time windows for drivers. Such constraints can be satisfied automatically
by limiting the set of links in constraint sets (\ref{eq:d_orig})
and (\ref{eq:d_dest}) to the ones whose time intervals are within
the driver's travel time window (in set $T_{d})$. 

Constraint set (\ref{eq:d_balance}) is for flow conservation, enforcing
that a driver entering a station in a time interval, exits the station
in the same time interval. Notice that participants might not physically
leave a station. Members of the link set $L$ in the form $(t,s,t+1,s)$
represent the case where a participant is physically remaining at
station $s$ for one time interval, but technically leaving node $(t,s)$
for node $(t+1,s)$. In addition, constraint set (\ref{eq:d_balance})
enforces the multi-hop property of the ridesharing system. Riders
can enter a node with a driver, and exit it with a different driver,
suggesting a transfer between the two drivers (or modes of transportation).
Constraint set (\ref{eq:d_travel_time}) limits the total travel time
by drivers based on their maximum ride times.

\begin{subequations}\label{equ:determinstic}

\begin{align}
\mbox{Max}\quad & \;\;\sum_{\medspace\medspace\medspace r\in R}z_{r}-\sum_{r\in R}W_{r}\sum_{d\in D}u_{r}^{d}\label{eq:Obj}\\
\mbox{Subject to:}\quad & \sum_{^{_{s_{i}=OS_{d}(t_{i},t_{j})\in T{}_{d}}^{\quad\:\;\;\quad\ell\in L:}}}\!\!\!\!\!\!\!\!\!\!x_{\ell}^{d}\:\:\:\:-\quad\!\!\!\!\!\!\!\!\!\sum_{^{_{s_{j}=OS_{d};t_{i},t_{j}\in T{}_{d}}^{\quad\:\;\;\quad\ell\in L:}}}\!\!\!\!\!\!\!\!\!\!x_{\ell}^{d}=1 &  & \forall d\in D\label{eq:d_orig}\\
 & \sum_{^{_{s_{j}=DS_{d};t_{i},t_{j}\in T{}_{d}}^{\quad\:\;\;\quad\ell\in L:}}}\!\!\!\!\!\!\!\!\!\!x_{\ell}^{d}\:\:\:\:-\quad\!\!\!\!\!\!\!\!\!\sum_{^{_{s_{i}=DS_{d};t_{i},t_{j}\in T{}_{d}}^{\quad\:\;\;\quad\ell\in L:}}}\!\!\!\!\!\!\!\!\!\!x_{\ell}^{d}=1 &  & \forall d\in D\label{eq:d_dest}\\
 & \;\;\sum_{_{\ell=(t_{i},s_{i},t,s)\in L}^{\qquad t_{i},s_{i}}}\!\!\!\!\!x_{\ell}^{d}\:\:\:=\!\!\!\!\!\!\sum_{_{\ell=(t,s,t_{j},s_{j})\in L}^{\qquad t_{j},s_{j}}}\!\!\!\!\!x_{\ell}^{d} &  & \!\!\!\begin{array}{l}
\forall d\in D\\
\forall t\in T_{d}\\
\forall s\in S\backslash\{OS_{d}\cup DS_{d}\}
\end{array}\label{eq:d_balance}\\
 & \;\;\sum_{^{\ell\in L}}(t_{j}-t_{i})x_{\ell}^{d}\leq\frac{T_{d}^{TB}}{\Delta t} &  & \forall d\in D\label{eq:d_travel_time}\\
 & \:\:\sum_{^{d\in D^{\prime}}}\!\!\!\sum_{^{_{s_{i}=OS_{r};t_{i},t_{j}\in T{}_{r}}^{\quad\:\;\;\quad\ell\in L:}}}\!\!\!\!\!\!y_{\ell}^{rd}-\sum_{^{\quad d\in D^{\prime}}}\!\!\!\sum_{^{_{s_{j}=OS_{r};t_{i},t_{j}\in T{}_{r}}^{\quad\:\;\;\quad\ell\in L:}}}\!\!\!\!\!\!y_{\ell}^{rd}=z_{r} &  & \forall r\in R\label{eq:r_orig}\\
 & \:\:\sum_{^{d\in D^{\prime}}}\!\!\!\sum_{^{_{s_{j}=DS_{r};t_{i},t_{j}\in T{}_{r}}^{\quad\:\;\;\quad\ell\in L:}}}\!\!\!\!\!\!y_{\ell}^{rd}-\sum_{^{d\in D^{\prime}}}\!\!\!\sum_{^{_{s_{i}=DS_{r};t_{i},t_{j}\in T{}_{r}}^{\quad\:\;\;\quad\ell\in L:}}}\!\!\!\!\!\!y_{\ell}^{rd}=z_{r} &  & \forall r\in R\label{eq:r_dest}\\
 & \:\:\sum_{^{d\in D^{\prime}}}\!\!\!\sum_{^{_{\ell=(t_{i},s_{i},t,s)\in L}^{\qquad t_{i},s_{i}:}}}\!\!\!\!\!y_{\ell}^{rd}=\sum_{^{\quad d\in D^{\prime}}}\!\!\!\sum_{^{_{\ell=(t,s,t_{j},s_{j})\in L}^{\qquad t_{j},s_{j}:}}}\!\!\!\!\!y_{\ell}^{rd} &  & \!\!\!\begin{array}{l}
\forall r\in R\\
\forall t\in T_{r}\\
\forall s\in S\backslash\{OS_{r}\cup DS_{r}\}
\end{array}\label{eq:r_balance}\\
 & \:\:\sum_{d\in D^{\prime}}\:\:\:\:\sum_{^{\ell\in L}}(t_{j}-t_{i})y_{\ell}^{rd}\leq\frac{T_{r}^{TB}}{\Delta t} &  & \forall r\in R\label{eq:r_travel_time}\\
 & \:\:\sum_{r\in R}y_{\ell}^{rd}\leq C_{d}x_{\ell}^{d} &  & \!\!\!\begin{array}{l}
\forall d\in D\\
\forall\ell\in L
\end{array}\label{eq:capacity}\\
 & \;\;u_{r}^{d}\geq y_{\ell}^{rd} &  & \!\!\!\begin{array}{l}
\forall r\in R\\
\forall d\in D\\
\forall\ell\in L
\end{array}\label{eq:ident_driver_1}\\
 & \;\;u_{r}^{d}\leq\sum_{\ell\in L}y_{\ell}^{rd} &  & \!\!\!\begin{array}{l}
\forall r\in R\\
\forall d\in D
\end{array}\label{eq:ident_driver_2}\\
 & \:\:\sum_{d\in D}u_{r}^{d}-1\leq V_{r} &  & \forall r\in R\label{eq:transfer}
\end{align}

\end{subequations}

Rider $r$'s itinerary is determined by variable $y_{\ell}^{rd}$.
A value of $1$ for this variable indicates that the rider is traveling
on link $\ell$ in driver $d$'s vehicle. By definition,
this variable implies that a rider should always be accompanied by
a driver. However, in reality, a rider does not need to be accompanied
when he/she is traveling on a link in the form $(t,s,t+1,s)$, i.e.,
staying at station $s$, waiting to make a transfer. To incorporate
this element into the formulation, we introduce the dummy driver $d^{\prime}$.
As mentioned before, the dummy driver does not have a real origin
or destination in the network. The set of links used by the dummy
driver is also different from the members of the link set $L$. We
define the set of links for the dummy driver as $L^{\prime}=\{(t,s,t+1,s),\forall(t,s)\in T\times S\}$.
This set includes all the links that represent staying at a station
for one time interval. 

Constraint sets (\ref{eq:r_orig})-(\ref{eq:r_balance}) route riders
in the network, and are analogous to (\ref{eq:d_orig})-(\ref{eq:d_balance}),
except for a small variation. While the optimization problem generates itineraries for all drivers, matched or not, this is not the case for riders. Only riders who are successfully matched will receive itineraries. This
difference is reflected in the formulation by replacing $1$ on the
ride hand side of constraint sets (\ref{eq:d_orig})-(\ref{eq:d_dest})
by $z_{r}$ in constraint sets (\ref{eq:r_orig})-(\ref{eq:r_dest}).
Constraint set (\ref{eq:r_travel_time}) sets a limit on the riders'
maximum ride times. 

Constraint set (\ref{eq:capacity}) serves two purposes. First, it
ensures that riders are accompanied by drivers throughout their trips.
Second, it ensures that vehicle capacities are not exceeded. Constraint
sets (\ref{eq:ident_driver_1})-(\ref{eq:transfer}) collectively
set a limit on the total number of transfers for each rider. Constraint
sets (\ref{eq:ident_driver_1}) and (\ref{eq:ident_driver_2}) register
drivers who contribute to each rider's itinerary (refer
to Proposition 2 in the Appendix). Constraint set (\ref{eq:transfer})
restricts the number of transfers by each rider (refer to Proposition
1 in the Appendix). Finally, all decision variables of the problem
defined in (\ref{eq:var_Xd})-(\ref{eq:Var_U}) are binary variables. 

Although the formulation in model (\ref{equ:determinstic}) is defined
for a ridesharing system in its infancy, it is easy to include additional
terms in the objective function or introduce additional decision variables
and constraint sets to cover different objectives ridesharing systems
may have throughout their lifetime. For instance, we can minimize
the total travel time by riders and drivers by adding terms $\sum_{d\in D}\sum_{r\in R}\sum_{\ell\in L}(t_{j}-t_{i})y_{\ell}^{rd}$
and $\sum_{d\in D}\sum_{\ell\in L_{d}}(t_{j}-t_{i})x_{\ell}^{d}$
to the objective function, respectively. It is possible to minimize/maximize
the number of matched drivers by introducing the decision variable
$z_{d}^{\prime}$ that takes the value of 1 is driver $d$ is matched,
and $0$ otherwise. In this case, the unit values on the right hand
sides of constraint sets (\ref{eq:d_orig}) and (\ref{eq:d_dest})
should be replaced with $z_{d}^{\prime},$ and the term $W^{\prime}\sum_{d\in D}z_{d}^{\prime}$ should be added to the objective function.

In the formulation presented in model (\ref{equ:determinstic}), we
do not consider an exclusive service time for participants. This can
be a realistic assumption, since we are concerned with people and
not goods. However, if service times are required due to practical
considerations, they can be easily integrated by making small changes
in the definition of the link sets and constraint sets (\ref{eq:d_balance})
and (\ref{eq:r_balance}). Originally, we defined a link between two
stations $s_{i}$ and $s_{j}$ if the two stations were neighboring
stations in the network. To account for service times, we have to
redefine the link sets to include links between any two stations.
If a rider travels from their origin station to station $k$, and
then from station $k$ to their destination station, this implies
a transfer at station $k$. To include service times, equations (\ref{eq:d_balance})
and (\ref{eq:r_balance}) should be replaced with constraint sets
(\ref{equ:balance_d_service}) and (\ref{equ:balance_r_serivice}),
respectively, where $t_{s}$ is the service time (in number of time
intervals).

\begin{align}
& \!\!\!\!\!\!\!\!\sum_{_{\ell=(t_{i},s_{i},t,s)\in L}^{\qquad t_{i},s_{i}}}\!\!\!\!\!\!x_{\ell}^{d}\qquad =
 \!\!\!\!\!\!\! \sum_{_{\ell=(t+t_{s},s,t_{j},s_{j})\in L}^{\qquad t_{j},s_{j}}}\!\!\!\!\!\!x_{l}^{d}\quad \forall d\in D,\forall t\in T_{d},\forall s\in S\backslash\{OS_{d}\cup DS_{d}\}\label{equ:balance_d_service}\\
& \sum_{^{d\in D^{\prime}}}
  \!\!\!\!\!\!\sum_{_{\ell=(t_i,s_i,t,s)\in L}^{\qquad \!\!\!\! t_{i},s_{i}:}}\!\!\!\!\!\!y_{\ell}^{rd}=
  \sum_{^{d\in D^{\prime}}}
  \!\!\!\!\!\!\!\!\sum_{_{\ell=(t+t_{s},s,t_{j},s_{j})\in L}^{\qquad t_{j},s_{j}:}}\!\!\!\!\!\!\!\!y_{\ell}^{rd}\quad \forall r\in R,\forall t\in T_{r},\forall s\in S\backslash\{OS_{r}\cup DS_{r}\}\label{equ:balance_r_serivice}
\end{align}

Solving the optimization problem in model (\ref{equ:determinstic})
is computationally prohibitive, even for small instances of the problem.
Therefore, in its original form, the ride-matching problem in model (\ref{equ:determinstic})
is not appropriate for time-sensitive applications. In the next section,
we introduce a pre-processing procedure that reduces the size of the
input sets to the optimization problem. Subsequently, we propose a
decomposition algorithm that attempts to solve the original ride-matching
problem by means of iteratively solving multiple smaller problems called ``sub-problems''.

\section{Pre-processing Procedure}

The goal of the pre-processing procedure is to limit the number of
accessible links for each participant, and identify and eliminate
drivers who cannot be part of a rider's itinerary due to lack
of spatiotemporal compatibility between their trips. At the onset,
it should be noted that this procedure does not limit the search space
of the optimization problem, but only narrows it by cutting down practically
infeasible ranges, and therefore it does not affect the optimality
of the solution.

As explained before, we present a link, $\ell$, as a 4-tuple $(t_{i},s_{i},t_{j},s_{j})$.
Participants can potentially reach any station in the network in any
time interval within their travel time window, making the size of
the set of links, $L$, as large as $O(|T||S|)$, where $|T|$ is
the number of time intervals in the study time horizon, and $|S|$
is the number of stations in the network.

The premise of the pre-processing procedure is that the spatiotemporal
constraints enforced by maximum ride times and travel time windows
of participants limit their access to members of the link set $L$.
We use this information to construct the set of links accessible to
riders and drivers, denoted by $L_{r}$ and $L_{d}$, respectively.

The origin and destination stations, maximum ride times, and travel
time windows of participants can be used to define a region in the
network in the form of an ellipse, inside which participants have
a higher degree of space proximity, i.e., the percentage of accessible
stations within this region is at least as high as the same percentage
within the entire network. We call the region inside and on the circumference
of the ellipse associated with participant $p$ the reduced graph
of the participant, denoted by $G_{p}$ (reduced graph of rider $r$/driver
$d$ is denoted as $G_{r}/G_{d}$). The focal points of the ellipse
are the participant's origin and destination stations.
The length of the major axes between the focal points is the straight
distance between the origin and destination stations, and the transverse
diameter of the ellipse is an upper-bound on the distance that can
be traveled by the participant in $\frac{T_{p}^{TB}}{\Delta t}$ number
of time intervals, within the participant's travel time window. 

We know that for each point on the circumference of an ellipse, sum
of the distances from the two focal points is always constant, and
equal to the transverse diameter of the ellipse. By setting the length
of the transverse diameter to the maximum distance a participant can
travel given their travel time window, we ensure that none of the
stations outside of the participant's reduced graph are accessible
to them.

After the reduced graphs are generated, we use the link reduction
algorithm presented in Algorithm \ref{alg:link_reduction} to construct
sets $L_{r}$ and $L_{d}$. The algorithm finds the set of links for participant $p$ in two steps: a forward movement followed by a backward movement. Both forward and backward movements are iterative procedures. 
We start the algorithm by defining sets $T_s$ and $L(s)$ for all stations $s\in G_p$ and initializing them to be empty, where $T_s$ is the set of time intervals during which station $s$ can be reached, and $L(s)$ is the set of links terminating at station $s$ (line 2).

{\footnotesize{}}
\begin{algorithm}
{\footnotesize{}\caption{Link reduction algorithm\label{alg:link_reduction}\footnotesize{}
}{\footnotesize \par}

\textbf{Generate a link set $L_{p}$ for participant $p$}

$01\quad$\textbf{Initialize}

$02\quad$ $T_s=\emptyset, \;L(s)=\emptyset,\; \forall s\in G_{p}$

$03\quad$ \textbf{Step1. Forward movement}

$04\quad$ $S_{act}=\{OS_{p}\}$

$05\quad$ $\bar{S}_{act}=\emptyset$

$06\quad$ $T_{OS_p}=\{\frac{T^{ED}_p}{\Delta t},\frac{T^{ED}_p}{\Delta t}+1, \frac{T^{ED}_p}{\Delta t}+2,..., \frac{T_{p}^{LA}-T_{static}(OS_{p},DS_{p})}{\Delta t}\}$

$07\quad$ While $S_{act}\not=\emptyset$

$08\quad$ $\qquad s_{1}\leftarrow S_{act}(1)$

$09\quad$ $\qquad$For $s\in\{S\backslash OS_p\}$

$10\quad$ $\qquad\qquad$ Set $T_{s_1} = T_{s_1} \cup \{t_1\}$ such that $\exists \: \ell = (t,s,t_1,s_1)\in L(s), \forall (t,t_1)\in T_p^2$ 

$11\quad$ $\qquad$End For

$12\quad$ $\qquad S_{act}=S_{act}\backslash\{s_{1}\}$

$13\quad$ $\qquad\bar{S}_{act}=\bar{S}_{act}\cup\{s_{1}\}$

$14\quad$ $\qquad$For $s_{2}\in S:(s_{1},s_{2})\in G_{p}$

$15\quad$ $\qquad\qquad$If $s_{2}\cap\bar{S}_{act}=\emptyset$

$16\quad$ $\qquad\qquad\qquad S_{act}=S_{act}\cup\{s_{2}\}$

$17\quad$ $\qquad\qquad\qquad$For $t\in T_{s_{1}}$

$18\quad$ $\qquad\qquad\qquad$$\qquad L(s_{2})= L(s_{2})\cup \{(t,s_{1},t + T_{dynamic}(t,s_{1},s_{2})^{\dagger}$$,s_{2})\}$

$19\quad$ $\qquad\qquad\qquad$End For

$20\quad$ $\qquad\qquad$Else

$21\quad$ $\qquad\qquad\qquad$For $t\in T_{s_{1}}$

$22\quad$ $\qquad\qquad\qquad\qquad$If $T_{s_{2}}\cap\big\{ t+T_{dynamic}(t,s_{1},s_{2})\big\}\neq\emptyset$

$23\quad$ $\qquad\qquad\qquad\qquad\qquad L(s_{2})=L(s_{2})\cup\{(t,s_{1},t + T_{dynamic^{\dagger}}(t,s_{1},s_{2}),s_{2})\}$

$24\quad$ $\qquad\qquad\qquad\qquad$End If

$25\quad$ $\qquad\qquad\qquad$End For

$26\quad$ $\qquad\qquad$End If

$27\quad$ $\qquad$End For

$28\quad$ End While

$29\quad$ \textbf{Step 2. Backward movement}

$30\quad$ $L_{del}=\{(t_{1},s_{1},t_{2},s_{2})\in L_{p}:(s_{2}=DS_{p})\wedge (t_{2}>\frac{T_{p}^{LA}}{\Delta t})\}$

$31\quad$ While $L_{del}\not=\emptyset$

$32\quad$ $\qquad\ell(t_{1},s_{1},t_{2},s_{2})\leftarrow L_{del}(1)$

$33\quad$ $\qquad L_{del}=L_{del}\backslash\{\ell\}$

$34\quad$ $\qquad L_{p}(s_{2})=L_{p}(s_{2})\backslash\{\ell\}$

$35\quad$ $\qquad$For $s_{1}\in G_{p}$

$36\quad$ $\qquad\quad\;\;$For $(t,s):(t,s,t_{1},s_{1})\in L_{p}(s_{1})$ 

$37\quad$ $\qquad\qquad\quad L_{del}=L_{del}\cup\{(t,s,t_{1},s_{1})\}$

$38\quad$ $\qquad\quad\;\;$End For

$39\quad$ $\qquad$End For

$40\quad$ End While

$41\quad$ \textbf{Generating Link set $L_p$}

$42\quad$ $L_p = \bigcup_{s \in G_p} L(s)$

$43\quad$ $^{\dagger}T_{dynamic}(t,s_{1},s_{2}):$ travel time between stations
$s_{1}$ and $s_{2}$ at time interval $t$ 
}
\end{algorithm}
{\footnotesize \par}

In each iteration the forward movement generates a set of links originating from one of the stations in the reduced graph of participant $p$. We start the forward movement by defining the set of active stations, $S_{act}$, and initializing this set with the origin station of participant $p$ (line 4). 
The time intervals during which this origin station can be reached can be easily computed using the equation in line 6, where $T_{static}(s_1,s_2)$ denotes the shortest path travel time between stations $s_1$ and $s_2$. Set $T_{OS_p}$ in line 6 maintains indices of time intervals during individual $p$'s travel time window. To ensure that no feasible links are eliminated, $T_{static}$ should contain underestimated link travel times, say, for example, the travel times during non-peak hours.  
The algorithm then selects the first member of the active stations set, denoted by $s_{1}$ (line 8), and updates the set of time intervals during which $s_1$ can be reached by examining all the previously generated links that terminate at $s_1$ (lines 9-11). Once station $s_1$ has been processed (i.e., the time intervals during which this station can be reached are obtained), this station will be removed from set $S_{act}$ (line 12), and added to set $\bar{S}_{act}$ (line 13), which maintains a list of previously processed stations. Note that adding station $s_1$ to set $\bar{S}_{act}$ does not mean that the set $T_{s_1}$ is finalized; we might need to revise this set later, as the participant may need to visit a station more than once if their travel time window allows (e.g., a driver may visit a station twice to pick up different passengers that start their trips at different time intervals).

Next, outgoing links from $s_1$ whose end stations are inside
the reduced graph are identified (line 14) and their end stations are added
to the set of active stations (line 16). At this point, we have information on the starting station $(s_{1})$, ending stations ($s_{2}:(s_{1},s_{2})\in G_{p})$ and the starting time intervals at $s_{1}$ $(T_{s_1})$. The ending time intervals for each $s_{2}$ can be easily looked up from a dynamic (i.e., time-dependent) travel time matrix, $T_{dynamic}$, completing the information required to construct the set of links originating at $s_{1}$ (lines 17-19). Note that once we identify station $s_2$ on the reduced graph, the process of generating $L(s_2)$ will be slightly different depending on whether $s_2$ is a member of set $\bar{S}_{act}$ (lines 17-19) or not (lines 21-25), ensuring that a single link is not added to the link set multiple times. We iterate the forward movement until the set of active stations becomes empty.

Figure \ref{fig:Ellipse} demonstrates an example of the forward movement
for a participant who is traveling from station $14$ to station $8$,
with $TW_{p}=[1,40](\Delta t=1$ min$)$, maximum ride time of $40$
minutes, and shortest path travel time of $38$ minutes. Link travel
times (in minutes) are shown on the graph. It is assumed that the
travel time remains constant on each link. Note that this assumption
is made only for simplicity, and using time-dependent travel times
would be just as straightforward. The set of links for each station
is computed during the forward movement, and is presented in the Figure
\ref{fig:Ellipse}. These links are not final, however, and have to
be refined during the backward movement.

The backward movement simply scans through the set of links generated
for each station by the forward movement and refines these sets by
removing the time intervals that are identified as infeasible based
on the participant's latest arrival time. 
The backward movement starts by identifying links that take the participant to his/her destination station after the participant's latest arrival time, and adds these links to set $L_{del}$, which is initially defined as an empty set (line 30). The algorithm then goes through the members of set $L_{del}$ one by one. For each member $\ell=(t_1,s_1,t_2,s_2)$, it removes $\ell$ from set $L_p(s_2)$ (line 34), identifies links with end nodes $(t_1,s_1)$ (line 36), and adds these links to set $L_{del}$  (line 37). The backward movement ends when set $L_{del}$ becomes empty. At this point, the union of all the remaining links $L(s), \forall s\in G_p$ yields the set of links for participant $p$ (line 42).

In the example shown in Figure \ref{fig:Ellipse}, since the latest arrival time is at $\Delta t=40$, links with ending time intervals $41$ and $42$ should be removed from the set of links for the destination station. Tracking back the
stations from destination to origin, the time intervals for the stations
that have led to the infeasible time intervals at the destination
station are identified and removed (see Algorithm \ref{alg:link_reduction}
for details). For the example in Figure \ref{fig:Ellipse}, after
completing the backward movement, the set of links, $L_{p}$, is derived
and listed in the figure.

\begin{figure}
\begin{centering}
\includegraphics[width=5in]{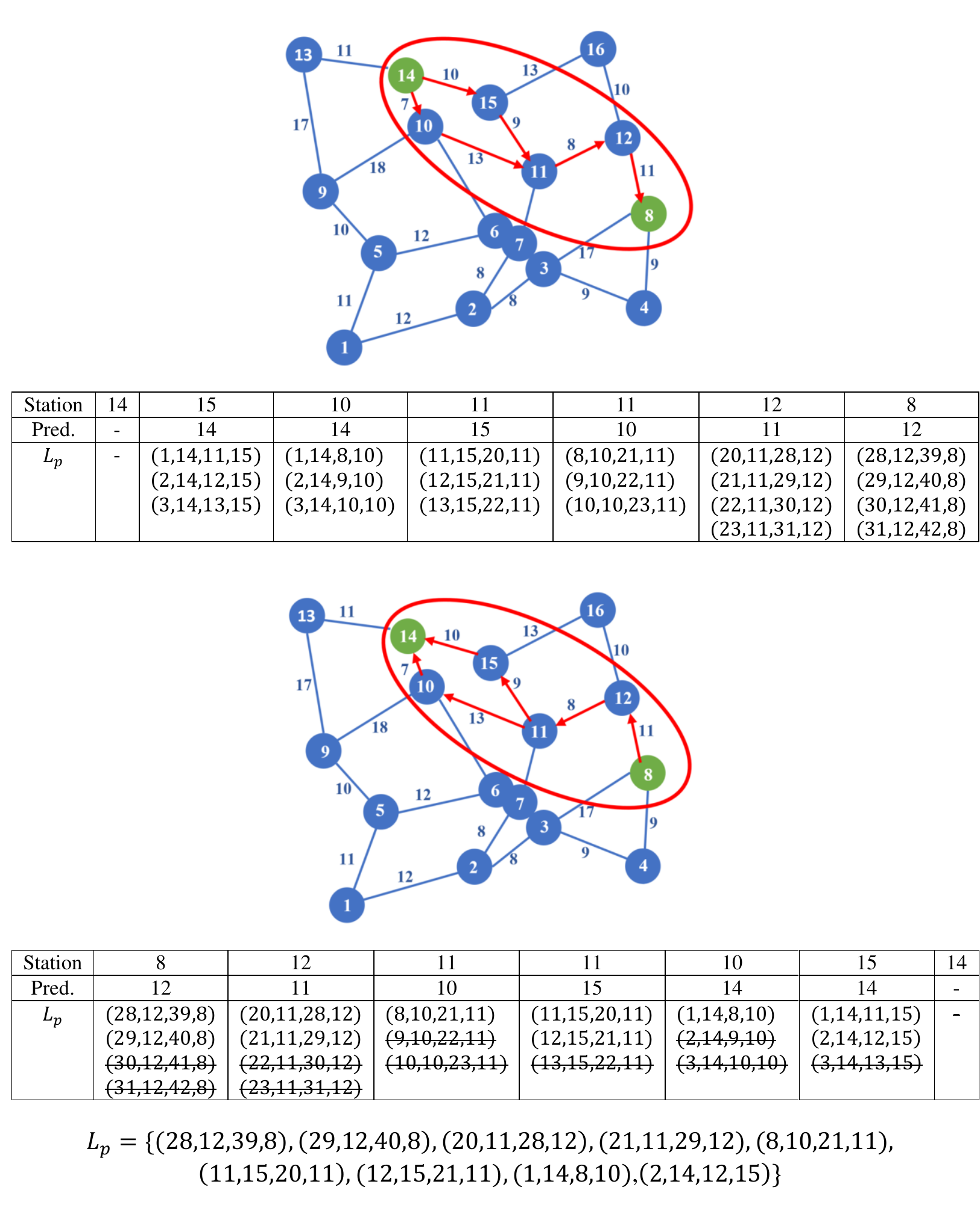}
\par\end{centering}

\caption{An example of forward and backward movements for a participant with
$OS_{p}=14$, $DS_{p}=3$, $\Delta t=1$ min, $TW_{p}=[1,40]$, and
\label{fig:Ellipse}$T_{p}^{TB}=40$ min }
\end{figure}

Once the set of links for participants are generated, we can use these
sets to reduce the size of some other sets in the optimization problem,
namely the set of riders and drivers. We can reduce the size of the
rider set $R$ by filtering riders out of the problem based on their
accessibility to potential drivers. For a rider to be served, they
should have spatiotemporal proximity with at least one driver at both
their origin and destination stations. Riders of set $R$ who do not
enjoy this spatiotemporal proximity can be filtered out. 

In addition, in order for a driver $d$ to be able to contribute to
the itinerary of a rider $r$, the intersection of link sets of the
two should not be empty, i.e., $L_{r}\bigcap L_{d}\neq\emptyset$.
We denote by $M$ the set of tuples $(r,d)\in R\times D$ for whom
$L_{r}\bigcap L_{d}\neq\emptyset$ and therefore could potentially
be matched. For members of set $M$, we construct a set $L_{rd}=\{L_{r}\bigcap L_{d}\}$.
Furthermore, we add tuples $(r,d^{\prime})$ to set $M$, and set
$L_{rd^{\prime}}=L^{\prime}$ for all the unfiltered riders $r\in R$.

Note that the ellipses that form boundaries of the reduced graphs
strictly prevent any feasible links from being cut off, and therefore
the link set $L$ can be safely replaced by $L_{p},\forall p\in P$.
Furthermore, since the potential drivers for each rider are determined
based on the reduced graphs, no potential driver is excluded from
the set of eligible drivers for each rider. Hence, using the refined
input sets generated by the pre-processing procedure in the ride-matching
problem does not affect the optimal matching of riders and drivers.

\section{Decomposition Algorithm}

The decomposition algorithm attempts to solve the original ride-matching
problem via iteratively solving a number of smaller problems called ``sub-problems'' that are easier to solve. The algorithm pseudo-code is described in Algorithm 2, in Appendix \ref{sec:Decomposition-Algorithm}. The
basic idea is that in each iteration the algorithm solves a number
of sub-problems that can represent the entire system. If the solutions
to these sub-problems do not have any conflicts, the algorithm is terminated
and the union of solutions to the sub-problems yields the global optimum
for the original problem (proof in section \ref{sub:Optimality}).
The algorithm flowchart is displayed in Figure \ref{fig:The-decomposition-algorithm}.

\begin{figure}
\begin{centering}
\includegraphics[width=3in]{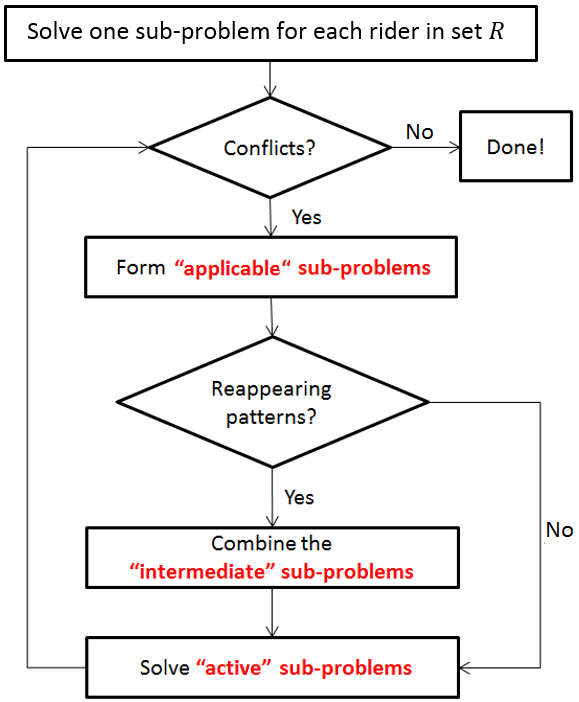}
\par\end{centering}
\caption{The decomposition algorithm flowchart\label{fig:The-decomposition-algorithm}}
\end{figure}

Let $R_{i}^{k}$ and $D_{i}^{k}$ denote the set of riders and drivers
in sub-problem $k$ of iteration $i$, respectively. Each sub-problem
includes a subset of riders in the problem. Once the subset of riders
for sub-problem $k$ in iteration $i$ is determined, the set of drivers
can be formed as $D_{i}^{k}=\{d|\forall r\in R_{i}^{k},(r,d)\in M\}$.

The algorithm starts by solving $|R|$ sub-problems, each including
one of the riders in set $R$. In the case of there being no conflicts
between the solutions, the solution to the original problem is readily
available. This happens if each rider is matched with a different
driver, or if multiple riders are matched with the same driver and
the driver is capable of performing all the pick-up and drop-off assignments
for the assigned riders. If not, conflicts are identified. Note that
existence of conflicts between drivers' paths in different sub-problems
implies that the union of solutions to the sub-problems is infeasible
to the original ride-matching problem. 

In each iteration, in the case of there being conflicts between solutions of the sub-problems in the previous iteration, we form the set of ``applicable'' sub-problems. An ``applicable'' sub-problem is comprised of: $(i)$ sub-problems from the previous iteration from which riders have been excluded (with the remaining set of riders), and $(ii)$ a group of riders from the last iteration's sub-problems with identical driver assignment (if these assignments conflict in time or space). Note that in the latter case if there are multiple groups of riders with identical but conflicting driver assignments, any such groups can be selected to form the new sub-problem. Sub-problems in the previous iteration that are not applicable sub-problems will be carried out to the next iteration without any change.

After the new set of sub-problems are formed, first we have to check
to see if there are any loops between iterations. If the set of sub-problems
in the current iteration is similar to the set of sub-problems in
a previous iteration, the algorithm will be looping between iterations
if no measures are taken. To prevent this, we re-define sub-problems
in the current iteration by forming an ``intermediate''
sub-problem whenever a loop is identified. An ``intermediate''
sub-problem combines the sub-problems from the previous iteration
that contribute to the loop, and hence prevent it (refer to Algorithm
2 in Appendix \ref{sec:Decomposition-Algorithm} for details.)

After all the new sub-problems are determined, a decision has to be
made on whether a sub-problem needs to be solved or not. Sub-problems
that need to be solved are called ``active'' sub-problems. These sub-problems are the ones whose optimal solutions
cannot be readily obtained from the previous solutions. Sub-problems
that have already been solved in the previous iterations (such as
non-applicable sub-problems) are not active. In addition, if a sub-problem
is the union of multiple sub-problems in a previous iteration, and
the solutions of these sub-problems do not conflict, the solution
to the sub-problem can be readily obtained by combining the solutions
of the non-conflicting sub-problems. The algorithm stops if the solutions
to the current iteration's sub-problems do not have any conflicts,
i.e., each driver is assigned to one route only. 

Note that using this decomposition algorithm, a large problem could
be solved in the first iteration, or we might end up solving multiple
sub-problems before having to solve the original problem in the last
iteration. The main merit of this algorithm is that sub-problems in
each iteration can be solved independently. This also indicates that
parallel computing implementations are possible. 

After performing the pre-processing procedure and while applying
the decomposition algorithm, we use a refined version of the the optimization
problem in model (\ref{equ:determinstic-1}), presented in Appendix
\ref{sec:Revised-version-optimization}. This refined problem (model
\ref{equ:determinstic-1}) is very similar to model \ref{equ:determinstic},
with two major differences: (i) Constraint sets (\ref{eq:d_travel_time})
and (\ref{eq:r_travel_time}) are now redundant, since the requirement
to not exceed the maximum ride time is met when we form the reduced
graphs and perform the forward and backward movements in the pre-processing
procedure, and (ii) input sets in model (\ref{equ:determinstic-1})
are more refined owing to both the pre-processing procedure and the
decomposition algorithm.

\subsection{Illustrative Example\label{sub:Illustrative-Example}}

Assume that a ridesharing system has 6 riders and 4 drivers, and that
all drivers have spatiotemporal proximity with all riders, i.e., $\forall\: (r,d)\in R\times D,\;(r,d)\in M$.
The iterations of the decomposition algorithm are displayed in Figure
\ref{fig:exp1}. As the interest is in showing the nature of the creation
of sub-problems, the actual network on which this problem is solved
is left out. The active sub-problems during each iteration are displayed
in blue in the figure. The sub-problems whose solutions do not change
throughout the iterations of the algorithm are displayed in green.

In iteration 1, each rider constitutes an active sub-problem. The
solutions show that riders 1, 3, and 6 all have driver 1 in their solution,
but in conflicting paths. Therefore, an active sub-problem of $\{r_{1},r_{3},r_{6}\}$ is formed and solved in the second iteration. Also, since rider 2 was not able to find any matches even without facing competition from other riders, he/she will not be able to find a match in the current
configuration of the system. Sub-problems $\{r_{4}\}$ and $\{r_{5}\}$
are not active sub-problems and their solutions are readily available.

The solution to the active sub-problem $\{r_{1},r_{3},r_{6}\}$ in
iteration 2 indicates that the optimal matches for riders 5 and 6
are in conflict (they are both matched with driver 2, but through
different paths.) So they form the applicable sub-problem $\{r_{5},r_{6}\}$.
Also, since rider 6 was removed from the sub-problem $\{r_{1},r_{3},r_{6}\}$,
the solution obtained for this sub-problem for riders 1 and 3 might
not be optimal anymore. Therefore, a new applicable sub-problem $\{r_{1},r_{3}\}$ is formed. However, not both of these newly formed sub-problems are active. The optimal match for rider 5 is driver 2, and the optimal match for rider 6 is driver 1. Since these two do not have any conflicts, the solution to the $\{r_{5},r_{6}\}$ sub-problem is readily available.

The only active sub-problem in iteration 3 is $\{r_{1},r_{3}\}$.
The solution to this sub-problem suggests that two new applicable
sub-problems $\{r_{1},r_{3},r_{6}\}$ and $\{r_{5}\}$ need to be
formed, which along with two sub-problems $\{r_{4}\}$ and $\{r_{2}\}$
should constitute the set of sub-problems for iteration 4. However,
we had the exact same set of sub-problems iteration 2. Therefore,
in order to avoid looping, a new (intermediate) sub-problem $\{r_{1},r_{3},r_{6},r_{5}\}$
is formed in iteration 4. After solving this sub-problem, we find
that there are no more conflicts. Hence, the global solution to the original problem is obtained in iteration 4. A total of 9 sub-problems needed
to be solved for this solution to be obtained. However, in iteration
1, all 6 active sub-problems could be solved simultaneously.

\begin{figure}
\begin{centering}
\includegraphics[width=4in]{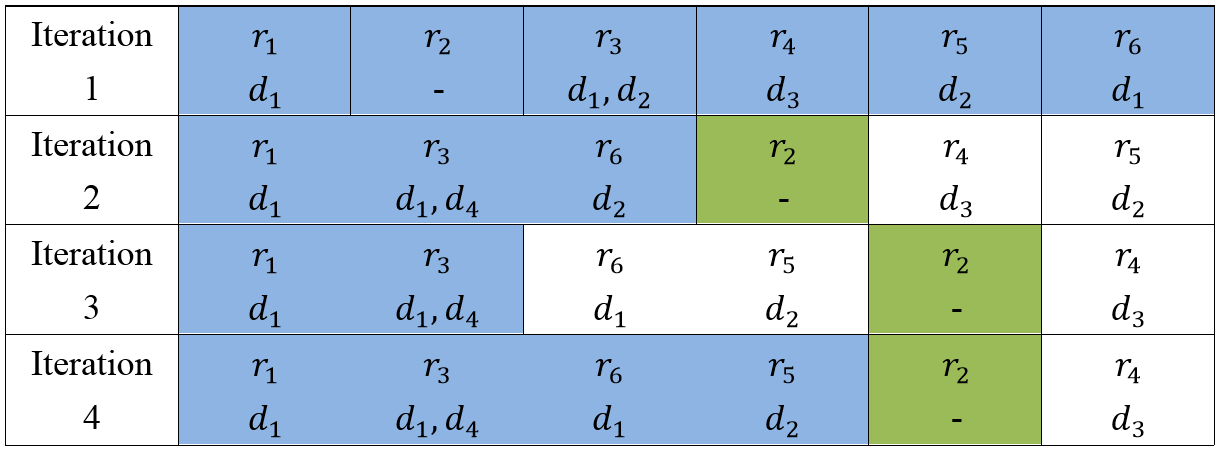}
\par\end{centering}
\caption{Iterations of the decomposition algorithm\label{fig:exp1}}
\end{figure}

It is possible to use a simpler version of the algorithm which is
easier to implement, but may take longer to solve. In this simplified
version, if any two riders in two sub-problems have conflicts, all
the riders in the two sub-problems are combined into a new sub-problem
in the following iteration. This will lead to potentially fewer iterations, but larger sub-problems to be solved in each iteration. 

Let us apply this simplified algorithm to the example above. The sub-problems
in each iteration are presented in Figure \ref{fig:exp2}. Here, after
solving the active sub-problem $\{r_{1},r_{3},r_{6}\}$ in iteration
2, and studying the solutions of all sub-problems, it turns out that
riders 6 and 5 are both matched with driver 2, but through conflicting
paths. Therefore, in the next iteration the two sub-problems $\{r_{1},r_{3},r_{6}\}$ and $\{r_{5}\}$ are combined. In this particular example, using the simplified version of the algorithm leads to reaching the optimal solution in fewer iterations, and less amount of time, since in Figure
\ref{fig:exp2} we are skipping iteration 3 in Figure \ref{fig:exp1}.
Although reaching the optimal solution in fewer iterations is expected
using the simplified version of the algorithm, a smaller solution
time is not a typical behavior to be expected from the simplified
version of the algorithm.

\begin{figure}
\begin{centering}
\includegraphics[width=4in]{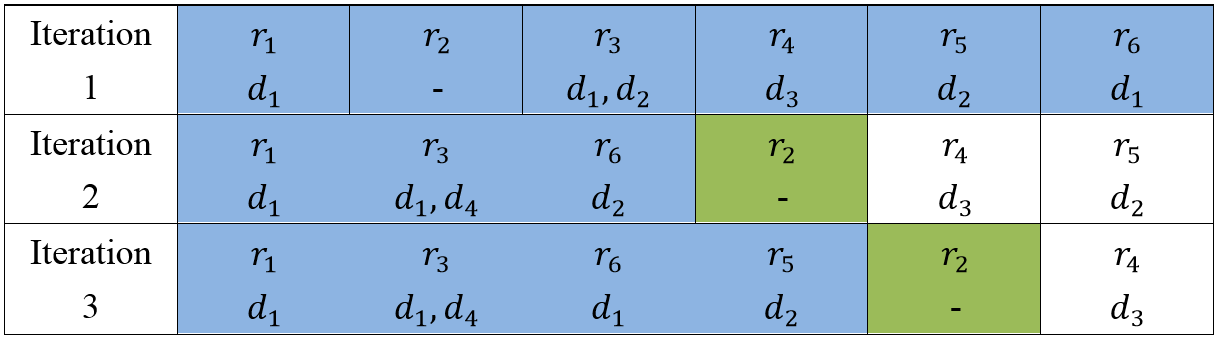}
\par\end{centering}
\caption{Iterations of the simplified decomposition algorithm\label{fig:exp2}}
\end{figure}

\subsection{Properties of the Decomposition Algorithm}

\subsubsection{Optimality\label{sub:Optimality}}

During each iteration of the decomposition algorithm, we are in fact
solving a relaxation of the original ride-matching problem. In the
original optimization problem each driver can be assigned to a single
route. By including a driver in multiple sub-problems, we might have
the driver being routed differently in each. In fact, we are relaxing
the constraint sets that force each driver to take a single route.
Throughout the iterations, we are trying to find conflicts of this
nature, and merge the sub-problems whose solutions demonstrate such
a conflict, in an attempt to find a solution with no conflicts.

If the optimal solution to this relaxation satisfies the omitted set
of constraints, then this solution would be optimal to the original
problem as well. We have based the stopping criterion of our algorithm
on this factual principle. After the final iteration of the algorithm,
there exist no more conflicts between any given driver's
routes in different sub-problems. Therefore, what we find through
the use of the decomposition algorithm is indeed an optimal solution
to the relaxation of the original problem that does not violate the
set of omitted constraints. Hence the solution found is the globally
optimal solution to the original problem. 

It should be noted that throughout the iterations, the union of the
sub-problem solutions form an infeasible solution to the original
problem. It is only in the last iteration, where there are no conflicts
among driver assignments, when the first feasible solution is obtained,
and this feasible solution is in fact the optimal solution. Furthermore,
note that although the pre-processing procedure may eliminate some
of the drivers from the pool of drivers available for each rider,
it does not affect the optimality of the final matching. The omitted
drivers could not have contributed to the solution in any case, since
they did not have spatiotemporal proximity to the riders. Additionally, note that the proposed algorithm is optimal as long as maximizing the matching rate (or any rider-related terms, such as minimizing the total VMT by riders) remains the primary objective of the system. Otherwise, the decomposition algorithm provides a bound on the optimal solution.

\subsubsection{Bounds\label{sub:Bounds}}

In this section we discuss computing upper and lower bounds on the
optimal value of the optimization problem in model (\ref{equ:determinstic})
after each iteration of the decomposition algorithm. We compute the
bounds for the objective function in (\ref{eq:Obj}), but the concept
can be easily extended to a variety of objective functions. 

Clearly, if a ride-matching problem is solved to optimality, the upper
and lower bounds would be similar. However, if there is a time limit
on reaching a solution, we might be obliged to settle for a sub-optimal
solution (as will be described later in section \ref{sub:Heuristic-solutions}).
In that case, upper and lower bounds on the optimal solution can be
computed after completion of each iteration to provide insight on
the quality of the solution.

The main objective in model (\ref{equ:determinstic}) is to find the
maximum number of served riders, and the second term in (\ref{eq:Obj})
has been added only for technical reasons (to ensure that the constraint
sets work properly, as explained in Proposition 1 in Appendix \ref{sec:Proof-of-Proposition}.)
Therefore, while solving the sub-problems, we include this term, but
use only the value of the first term as the value of the objective
function of the problem while deriving bounds. 

In section \ref{sub:Optimality}, we discussed that the solution reached
in every iteration of the decomposition algorithm is infeasible, until
the last iteration. This infeasibility is caused by conflicts between
itineraries of common drivers in different sub-problems. When such
conflicts exist, the best case scenario is for all the riders who
are receiving conflicting itineraries to be eventually served in the
system. Therefore, the upper bound in each iteration can be computed
as the sum of the number of served riders in all sub-problems, regardless
of the conflicting driver itineraries. It should be noted that the
upper bound is strictly non-increasing with the number of iterations.
The reason is that when two sub-problems have conflicting assignments
during a given iteration, the riders who are receiving conflicting
itineraries will be included in the same sub-problem in the following
iteration. If all such conflicting riders can be served, then the
upper bound to the objective function does not change. Otherwise,
the upper bound will decrease.

The lower bound in each iteration can be obtained by solving a set
packing problem. The optimal solution to sub-problems of a typical
iteration provides us with information on the riders with conflicting
itineraries, and drivers who form the itineraries of such riders.
To find the tightest lower bound, we have to find the maximum number
of riders who could be served under the current configuration (i.e.,
assuming that the driver routes are fixed.) This is analogous to solving
a set packing problem where the universe is the set of drivers, and
the sub-sets are drivers who form each rider's itinerary.
It should be noted that unlike the upper bound that has a non-increasing
trend with the number of iterations, the lower bound is not necessarily
non-decreasing with iterations. In addition, note that the lower bound
corresponds to a feasible solution, while the upper bound corresponds
to a possibly infeasible solution.
Figure \ref{fig:Bounds} shows the lower and upper bounds for the example in section \ref{sub:Illustrative-Example}. Clearly the lower
and upper bounds meet at the final iteration.

\begin{figure}
\begin{centering}
\includegraphics[width=3in]{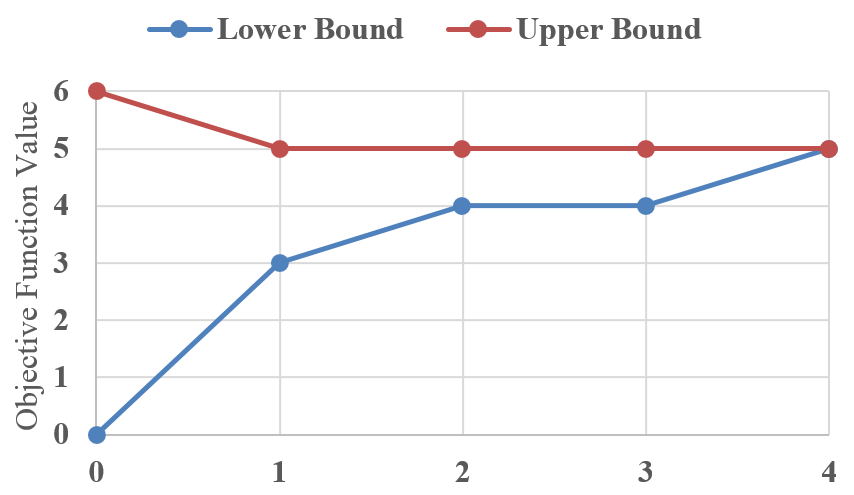}
\par\end{centering}

\caption{Bounds on the objective function value for the example in section
\ref{sub:Illustrative-Example}\label{fig:Bounds}}
\end{figure}

As we mentioned in the beginning of this section, we can follow a
similar logic to compute bounds for a variety of objective functions, as long as serving the maximum number of riders remains the primary objective of the system. Since we are solving a relaxation of the original ride-matching problem
at each iteration, no matter the objective function, sum of the objective
functions of sub-problems always provides an upper bound on the original
problem. To obtain a lower bound at each iteration, similar to what
we did for the objective function in (\ref{eq:Obj-1}), we have to
derive a feasible global assignment from the solutions of sub-problems
at each iteration by solving a set packing problem. At each iteration,
once we find the set of riders who can be served, the drivers who
serve them and the itineraries of these riders and drivers are readily
available, and can be used to calculate any measures that are included
in the objective function to obtain a lower bound.

\section{Numerical Study}\label{sec:numericalstudy}

To evaluate the performance of the proposed decomposition algorithm,
we generate and solve 420 random instances of the ride-matching problem.
Each problem instance has a different size in terms of number of riders
and drivers, and is generated in a grid networks with 49 stations.
The results reported here are averaged over 10 runs for each problem instance. 

The number of participants $|P|$ in the problem instances varies
between 20 and 400, and the number of riders $|R|$ between 1 and
$|P|$. Origins and destinations of trips are selected uniformly at
random among the 49 stations. Earliest departure times of all trips
are generated randomly within a one hour time period. The maximum
ride time for a participant $p$ is generated randomly based on a ``travel time budget factor'' of 1.1, indicating that the particpant's maximum ride time takes a value within the window $[tt_{(OS_{p},DS_{p})},1.1\thinspace tt_{(OS_{p},DS_{p})}]$, where $tt_{s_1,s_2}$ is the travel time between stations $s_1$ and $s_2$.
The latest arrival times are simply the sum of the earliest departure
times and the maximum ride times of participants. Each vehicle is
assumed to have 4 empty seats, and each rider is assumed to accept
up to three transfers. 

The ride-matching instances are solved on a PC with Core i7 3 GHz
and 8GB of RAM. The optimization problems are coded in AMPL, and solved
using CPLEX 12.6.00 with standard tuning. 

In addition to solving the 420 problem instances with the stated default parameter values, we solve additional problem instances in the following sections to conduct sensitivity analysis over some of the parameters of the
problem, including the number of stations, the spatiotemporal proximity
of trips, and the maximum ride times of participants. The parameter values used in the rest of the paper are the default values, unless otherwise specified. Furthermore, the results presented in the rest of the paper are for solving the static versions of problems, unless otherwise specified.

\subsection{Pre-processing\label{sub:Pre-processing}}

In this section we look at the running time of the pre-processing
procedure, and examine its impact on the size of the ride-matching
problem that needs to be solved. Figure \ref{fig:Pre_Proc_time} shows
the contour plot of the pre-processing time as a function of the size
of the problem. The pre-processing time is the sum of two components:
the time required to generate the set of feasible links for each participant,
and the time required to find the set of feasible drivers for each
rider. The former component consists of the time spent on generating
reduced graphs, and forming sets of feasible links for participants
using forward and backward movements. This task can be completed when
a participant registers a request in the system, and hence is not
a bottleneck when it comes to solving the ride-matching problem in
real-time in a rolling time horizon implementation. Even if a large
number of participants join the system at the same time, the computations
can be done independently, and implemented in a parallel computing
system. 

The latter component of the pre-processing time consists of the time
required to compare each rider's set of feasible links
to those of the drivers. Similar to the first component, these computations
can be done once a rider joins the system (with all registered drivers),
and updated continuously as drivers keep joining the system.

The total pre-processing time depends on the size of the problem.
For the 420 problem instances solved in this section, the maximum
time was 3.5 seconds. The distribution of the pre-processing times
over the range of sizes of the randomly generated instances are displayed
in Figure \ref{fig:Pre_Proc_time}. The pre-processing procedure managed
to reduce the average size of the link sets for participants to $0.01
$ of the size of the original link set $L$.

Not only does the pre-processing procedure lead to indirect savings
in solution times by limiting the size of the input sets to the optimization
problem, but more importantly a quick scan of the participants'
feasible links can lead to direct and more considerable savings in
solution times. For a rider to be served, he/she should have spatiotemporal
proximity with at least one driver at both origin and destination stations.
Whether such proximity exists can be easily examined using the sets
of feasible links. Riders who do not enjoy spatiotemporal proximity
at both origin and destination stations can be filtered out from the
system. Figure \ref{fig:Pre_Proc_Filter} shows the percentage of riders
filtered out from each problem instance during the pre-processing
procedure. This figure suggests that, for a given number of riders,
the lower the number of drivers, the higher the number of filtered
riders, as expected. 

\begin{figure}
\subfloat[Pre-processing time (sec)\label{fig:Pre_Proc_time}]{\begin{centering}
\includegraphics[width=3.3in]{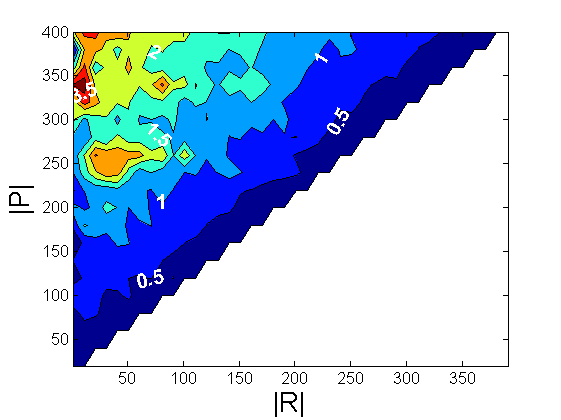}
\par\end{centering}
}\hfill{}\subfloat[Percentage of filtered riders\label{fig:Pre_Proc_Filter}]{\begin{centering}
\includegraphics[width=3.1in]{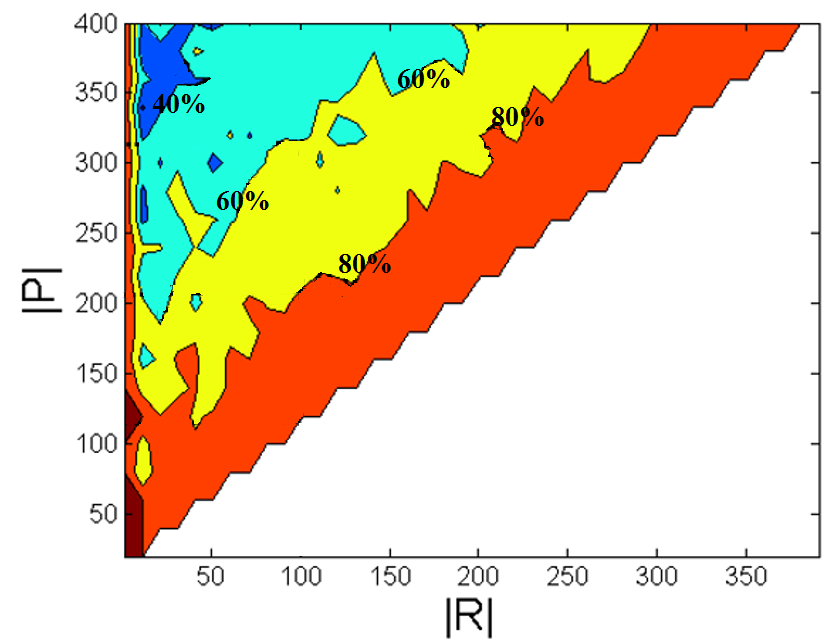}
\par\end{centering}
}
\caption{The pre-processing procedure\label{fig:Pre_Proc}}
\end{figure}

\subsection{Value of a Multi-hop Solution\label{sub:Value-of_Multihop}}

A ridesharing system can use a variety of approaches to match riders
with drivers. Ride-matching methods could differ in multiple aspects,
ranging from the flexibility in routing of drivers, to the type of
routes devised for riders (single- or multi-hop). Matching methods
can be formulated as optimization problems, and solved to optimality.
However, different ride-matching methods lead to different system
performance levels. 

In this section, in order to investigate the effectiveness of the
ridesharing system defined in this study, we compare its performance
with a few more common ride-matching methods. Figure \ref{fig:Multi-Hop}
compares the cumulative number of served riders in the 420 randomly-generated
problem instances using five different matching methods. The problem
instances in this figure are sorted by the number of participants.

The first matching method labeled as the ``OD-based'' in Figure
\ref{fig:Multi-Hop} matches riders and drivers who share the same
origin and destination locations, if their travel time windows allow. The
second matching method, labeled as ``Single-hop, Fixed-route'' matches
each rider with a single driver, where the drivers' routes are pre-specified
and fixed (although they still have flexibility in departure time.)
The third matching method labeled as ``Multi-hop, Fixed-route''
allows riders to transfer between drivers. The drivers' routes, however,
remain pre-specified. The fourth matching method labeled as ``Single-hop,
Flexible-route'' matches each rider with one driver only. In this
method, however, the ridesharing system takes over routing of the
drivers. The fifth and final matching method labeled as ``Multi-hop,
Flexible-route'' allows riders to transfer between drivers. The driver
routes are not pre-specified, and will be determined by the system. 

As Figure \ref{fig:Multi-Hop} indicates, the ``OD-based'' matching
method provides the least number of matches. The ``Single-hop, Fixed-route''
method does significantly better, since using this method not only can
drivers carry the riders who share the same origin and destination
locations with them (as in the ``OD-based'' method), but also they
can carry passengers whose origin and destination locations lie on
their pre-specified routes. The ``Multi-hop, Fixed-route'' matching
method produces results that are slightly superior to the ``Single-hop,
Fixed-route'' method, implying that allowing riders to transfer between
drivers is in general beneficial to the system. 

The ``Single-hop, Flexible-route'' and ``Multi-hop, Flexible-route''
methods lead to considerable improvements in the number of matches.
This implies that having the system route drivers can be one of
the most influential features of a ridesharing system. Comparison
of the ``Single-hop, Flexible-route'' and ``Multi-hop, Flexible-route''
methods suggests that allowing transfers in the system is the second
most important factor, after routing drivers. In addition, comparing
the ``Single-hop, Flexible-route'' and ``Multi-hop, Flexible-route''
methods with ``Single-hop, Fixed-route'' and ``Multi-hop, Fixed-route''
methods suggests that the value of a multi-hop solution becomes more
prominent when the driver routes are not pre-specified. 

\begin{figure}
\begin{centering}
\includegraphics[width=3.5in]{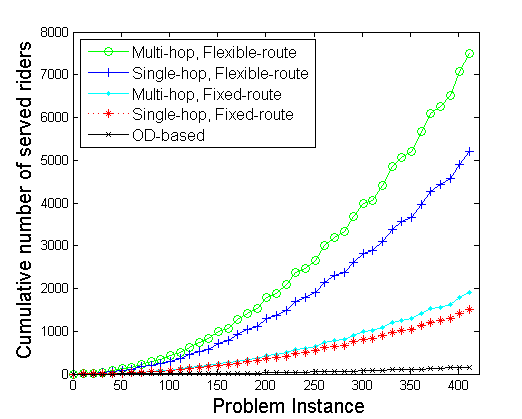}
\par\end{centering}
\centering{}\caption{Cumulative number of served riders using five different matching algorithms\label{fig:Multi-Hop}}
\end{figure}

Figure \ref{fig:Multi-Hop} measures the efficiency of the matching
methods based on the number of served riders. What this figure fails
to demonstrate is how measures of quality of service, such as number
of transfers and waiting times in transfer stations for riders, and
the average vehicle occupancy and the extra time spent in the network
by drivers compare between these methods. Table \ref{tab:Quality-of-service-multihop}
displays these measures of quality of service along with some additional
metrics to assess the five matching methods from different perspectives. 

The results presented in this table are averaged over 10 runs with
400 participants, 200 of which are drivers and 200 are riders. This
problem configuration is selected because it yields the highest number
of served riders, regardless of the matching method. The earliest
departure times of all trips are randomly generated within a 60 minute
time period, and the trip origin and destination stations are selected
randomly among the 49 stations. The quality of service measures in
this table are averaged over all participants in all randomly-generated
problems. 

Table \ref{tab:Quality-of-service-multihop} suggests that in addition
to serving higher number of riders, the multi-hop matching methods
also engage higher number of drivers, compared to their single-hop
counterparts. Although the ultimate purpose of a ridesharing system
is serving riders, involving higher number of drivers is important
as well, because users who do not get matched might stop registering
in the system after a few failed attempts. Another interesting point
is that in multi-hop matching methods the number of matched drivers
is higher than the number of served riders. In spite of this observation,
the average number of riders a driver in a multi-hop system carries
is slightly higher than the average number of riders carried by a
driver in a single-hop system. 

Another interesting observation is that vehicle occupancies are highest
in a multi-hop fixed-route system. In such a system, because driver's
routes are fixed, a smaller percentage of drivers get matched. However,
drivers who get matched are the ones who drive on ``popular'' routes,
and therefore end up with higher occupancy rates. In other words,
although the system is not routing drivers, those drivers who try to
optimally route themselves will benefit more compared to the case where
the system routes all drivers.

Table \ref{tab:Quality-of-service-multihop} suggests that the measures
of quality of service stay within reasonable ranges for all matching
methods. It should be noted, however, that these measures correlate
with the problem parameters in general, and are more sensitive to
some parameters than others, as we will see in the following sections. 

\begin{landscape}
\begin{table}
\footnotesize
\caption{Quality of service measures for the five matching methods in a system
with relatively low spatiotemporal proximity among trips\label{tab:Quality-of-service-multihop}}
\tabcolsep .22pc
\centering{}%
\begin{tabular}{llccccccc}
\hline 
 &  & \multicolumn{3}{c}{Riders} &  &  & \multicolumn{2}{c}{Drivers}\tabularnewline
\cline{3-5} \cline{7-9} 
 &  & Num. (\%) & Min\textbackslash{}avg.\textbackslash{}max  & Min\textbackslash{}avg.\textbackslash{}max  &  & Num. (\%) & Min\textbackslash{}avg.\textbackslash{}max  & Min\textbackslash{}avg.\textbackslash{}max \tabularnewline
Matching method &  & served & \ num. transfers & \ wait in transfer (min) &  & involved & \ extra travel time (min) & riders on board\tabularnewline
\hline 
\hline 
OD-based &  & 5 (3\%) & NA & NA &  & 5 (3\%) & NA & 1.0\textbackslash{}1.0\textbackslash{}1.0\tabularnewline
Single-hop, Fixed-route &  & 11 (6\%) & NA & NA &  & 8 (4\%) & NA & 1.0\textbackslash{}1.4\textbackslash{}3.1\tabularnewline
Multi-hop, Fixed-route &  & 16 (8\%) & 0.0\textbackslash{}0.6\textbackslash{}1.9 & 0.0\textbackslash{}0.4\textbackslash{}2.5 &  & 18 (9\%) & NA & 1.0\textbackslash{}1.4\textbackslash{}3.2\tabularnewline
Single-hop, Flexible-route &  & 32 (16\%) & NA & NA &  & 26(13\%) & 0.0\textbackslash{}3.0\textbackslash{}8.1 & 1.0\textbackslash{}1.2\textbackslash{}2.3\tabularnewline
Multi-hop, Flexible-route &  & 52 (26\%) & 0.0\textbackslash{}0.5\textbackslash{}2.6 & 0.0\textbackslash{}0.6\textbackslash{}2.1 &  & 62(31\%) & 0.0\textbackslash{}2.1\textbackslash{}6.9 & 1.0\textbackslash{}1.3\textbackslash{}1.8\tabularnewline
\hline 
\end{tabular}
\end{table}

\begin{table}
\footnotesize
\caption{Quality of service measures for the five matching methods in a system
with relatively high spatiotemporal proximity among trips\label{tab:Quality-of-service-multihop-high_Spatiotemporal}}
\tabcolsep .22pc
\centering{}%
\begin{tabular}{llccccccc}
\hline 
 &  & \multicolumn{3}{c}{Riders} &  &  & \multicolumn{2}{c}{Drivers}\tabularnewline
\cline{3-5} \cline{7-9} 
 &  & Num. (\%) & Min\textbackslash{}avg.\textbackslash{}max  & Min\textbackslash{}avg.\textbackslash{}max  &  & Num. (\%) & Min\textbackslash{}avg.\textbackslash{}max  & Min\textbackslash{}avg.\textbackslash{}max \tabularnewline
Matching method &  & served & \ num. transfers & \ wait in transfer (min) &  & involved & \ extra travel time (min) & riders on board\tabularnewline
\hline 
\hline 
OD-based &  & 21(11\%) & NA & NA &  & 19(10\%) & NA & 1.0\textbackslash{}1.1\textbackslash{}2.4\tabularnewline
Single-hop, Fixed-route &  & 79(40\%) & NA & NA &  & 62(31\%) & NA & 1.0\textbackslash{}1.3\textbackslash{}4.2\tabularnewline
Multi-hop, Fixed-route &  & 89(45\%) & 0.0\textbackslash{}1.5\textbackslash{}2.5 & 0.0\textbackslash{}0.9\textbackslash{}3 &  & 104(52\%) & NA & 1.0\textbackslash{}1.9\textbackslash{}3.1\tabularnewline
Single-hop, Flexible-route &  & 104(52\%) & NA & NA &  & 79(35\%) & 0.0\textbackslash{}2.5\textbackslash{}6.1 & 1.0\textbackslash{}1.3\textbackslash{}2.9\tabularnewline
Multi-hop, Flexible-route &  & 152(76\%) & 0.0\textbackslash{}2.2\textbackslash{}2.9 & 0.0\textbackslash{}1.3\textbackslash{}5 &  & 181(86\%) & 0.0\textbackslash{}4.7\textbackslash{}8.8 & 1.0\textbackslash{}1.3\textbackslash{}3.6\tabularnewline
\hline 
\end{tabular}
\end{table}
\end{landscape}

The problem instances whose performance we studied in Table \ref{tab:Quality-of-service-multihop}
represent a ridesharing system that is spatiotemporally sparse. To
compare the matching methods in a more realistic setting, we generated
10 random instances of a ridesharing system with higher spatiotemporal
proximity among trips (Table \ref{tab:Quality-of-service-multihop-high_Spatiotemporal}). Similar to the problem instance studied in
Table \ref{tab:Quality-of-service-multihop}, this problem instance
contains 400 participants, 200 of which are riders and 200 are drivers.
The earliest departure times of all trips is generated within a 30
minute time period. The network is clustered into two sets of non-intersecting
sections, where all trip origins are randomly selected from stations
located in one of the sections, and all trip destinations from the
other. This trip generation procedure significantly increases the
spatiotemporal proximity among trips. The number of served riders
and matched drivers using all five ride-matching methods along with
some measures of quality of service are demonstrated in Table \ref{tab:Quality-of-service-multihop-high_Spatiotemporal}.
This table suggests that all matching methods serve higher number
of riders in a more spatiotemporally-dense ridesharing system.

\subsection{Percentage of Satisfied Riders' Requests}

Figure \ref{fig:Served_Riders} displays the contour plots of the
percentage of satisfied riders for the 420 randomly generated problem
instances. As intuition suggests, for a given number of riders, the
percentage of served requests increases with the number of drivers
in the system. Analyzing this type of graph can be especially helpful
when one is interested in the critical mass of participants required
to keep the system up and running, or when looking to set marketing
strategies. For example, if we have 100 registered riders in a network
where the origin and destination of trips are completely random, and
aim to serve at least 40\% of the unfiltered riders, then we know
we need at least 230 drivers (330 participants), and can tailor our
marketing strategies accordingly. 

Note that the percentages shown in Figure \ref{fig:Served_Riders}
are the result of uniformly distributed OD patterns in the problem instances,
which is the worst case scenario with a much less chance of joint rides
than found in real networks, where trip distributions imply high demands
for certain OD pairs, hence increasing the spatial proximity of trips
in the network. It is expected that real networks can operate with
much better ratios between drivers and riders. As a consequence, marketing
attempts in real networks can be targeted towards the more popular
OD pairs. 

\begin{figure}
\begin{centering}
\includegraphics[width=3.3in]{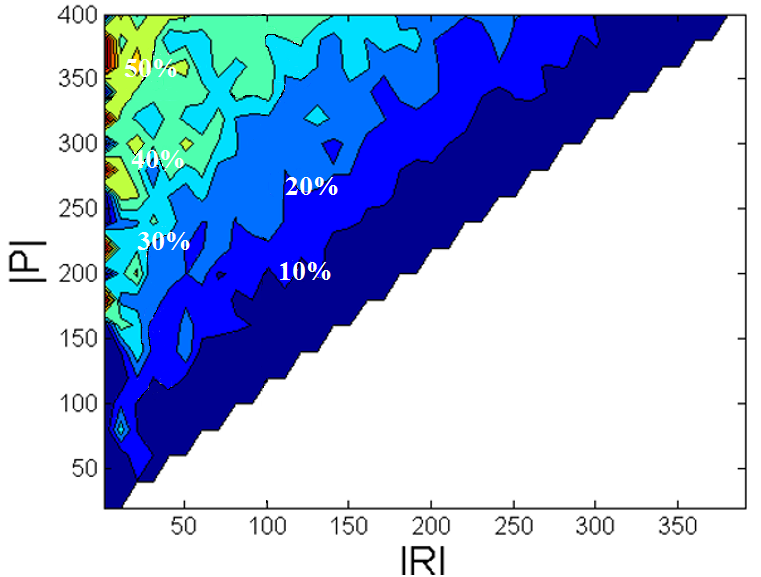}
\par\end{centering}
\caption{Percentage of un-filtered riders served\label{fig:Served_Riders}}
\end{figure}

\subsection{Algorithm Performance\label{sub:Algorithm-Performance}}

Figures \ref{fig:time_MIP} and \ref{fig:time_decompo} compare the
time required to solve the randomly generated ride-matching problems
to optimality, without and with the decomposition algorithm, respectively. Solution
times in these figures were obtained by solving models (\ref{equ:determinstic})
and (\ref{equ:determinstic-1}), respectively, using the CPLEX optimization
engine. Comparison of these two figures suggests considerable savings
in solution times can be obtained using the decomposition algorithm,
without any trade-offs in terms of solution accuracy. 
Note that solution times reported in Figures \ref{fig:time_MIP} and \ref{fig:time_decompo} have been obtained for solving the optimization problems after applying the pre-processing procedure. The majority of the raw versions of the 420 problem instances cannot be solved without reducing the size of the problem instances (will run out of memory.) 

Figure \ref{fig:Iteration} shows the number of iterations of the
decomposition algorithm it takes to solve the problems to optimality.
Note that the higher numbers of iterations correspond to problems
with higher solution times. None of the problem instances require
more than 7 iterations to be solved to optimality using the decomposition
algorithm.

\begin{figure}
\begin{centering}
\subfloat[Original matching problem solution times (sec)\label{fig:time_MIP}]{\begin{centering}
\includegraphics[width=3in]{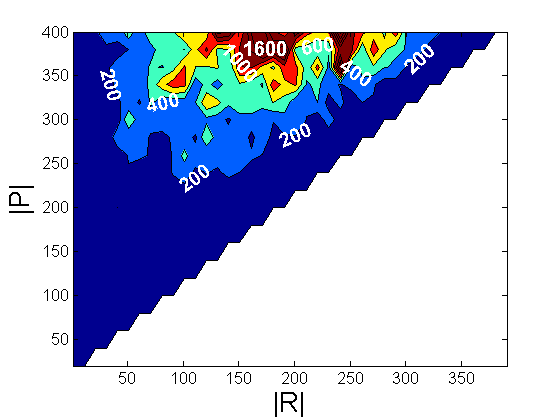}
\par\end{centering}
}\hfill{}\subfloat[Decomposition algorithm solution times (sec)\label{fig:time_decompo}]{\begin{centering}
\includegraphics[width=3in]{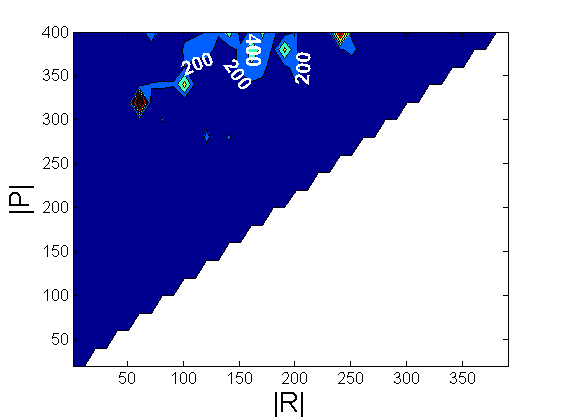}
\par\end{centering}
}
\par\end{centering}
\caption{Improvements in solution times. Contour plots of solution times in
seconds.\label{fig:alg_performance}}
\end{figure}
\begin{figure}
\begin{centering}
\includegraphics[width=3.3in]{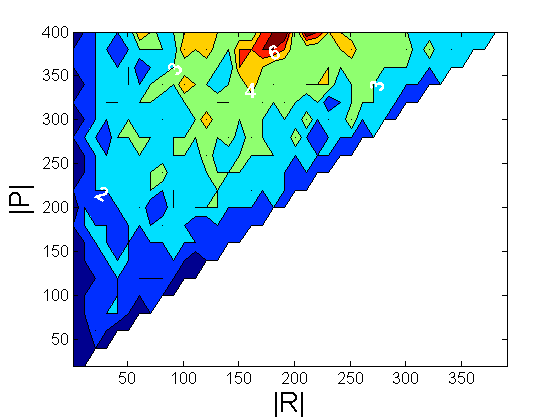}
\par\end{centering}
\caption{Number of decomposition algorithm iterations\label{fig:Iteration}}
\end{figure}

\subsection{Transfers}

Transfers usually connote discomfort, and understandably so; since
public transportation commonly runs under a fixed schedule, transit users might have to experience longer travel times and endure high waiting times in transfers. Note, however, that ridesharing systems of the future could be fundamentally different in nature, and better user-acceptance of transfers is possible depending on the ease of transfers, coordination of transfers in time, and reduction in transfer times offered by such systems \citep{taylor2009thinking}. This will naturally depend on the market penetration of the system as well as user-side behavioral changes, which are out of the scope of this paper. While this remains speculative, in this study we allow riders to specify their maximum number of transfers, and incorporate the maximum ride times requested by riders and drivers in our models.
These input parameters could vary among individuals depending on their
accessibility to personal vehicles, and their values of time. As an individual
riders' own transfer preferences are incorporated, this will help ease concerns on the perceived quality of service being affected by transfers.

A further reason to model a multi-hop ridesharing system is to a provide a general framework that can combine ridesharing systems with other modes of transportation, as discussed in section \ref{sec:Literature-Review}. It is straight-forward to include, say, a transit system simply as a set of high-capacity ``drivers'' with fixed routes in the formulation. In such cases, more than one transfer between the transit system and the rideshare mode is reasonable (one transfer from the rideshare vehicle to transit, and a second one from the final transit station to the rideshare vehicle.) Such multi-modal applications highlight the need for
algorithms that do not systemically limit the number of transfers.

Figure \ref{fig:tr} shows the cumulative distribution of number of transfers
in the 420 problem instances solved in this section. Even though
in our numerical experiments we assume that riders are comfortable
with up to 3 transfers, this figure suggests that a considerable portion of trips can be served with zero to one transfer, and riders in over 90\% of the problem instances can be fully served with no more than two transfers, which has serious implications when it comes to implementing a ridesharing system. This figure also suggests that transfers become necessary when the number of drivers is relatively low compared to the number of riders, as intuition suggests.

\begin{figure}
\begin{centering}
\subfloat[Percentage of served riders with zero transfers\label{fig:tr_zero}]{\begin{centering}
\includegraphics[width=3in]{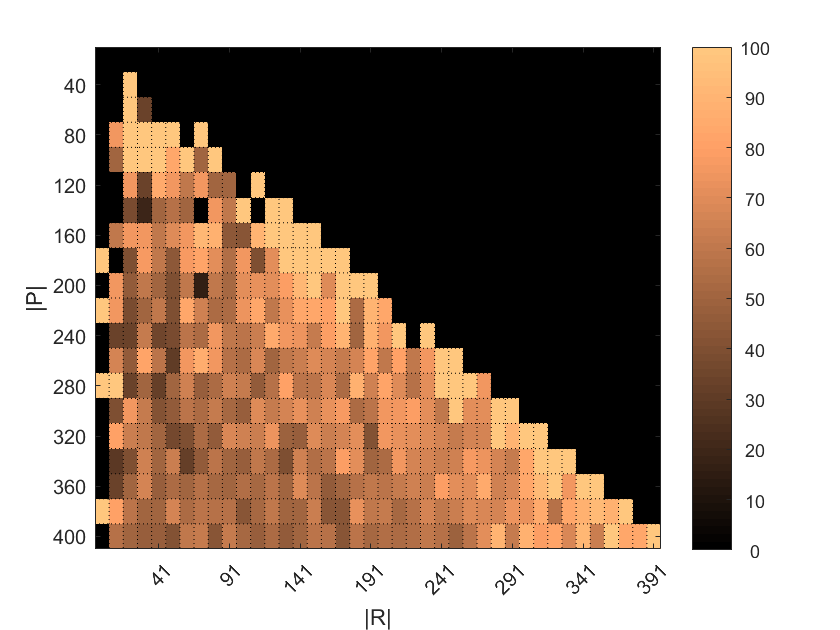}
\par\end{centering}
}\hfill{}\subfloat[Percentage of served riders with one or fewer transfers\label{fig:tr_one}]{\begin{centering}
\includegraphics[width=3in]{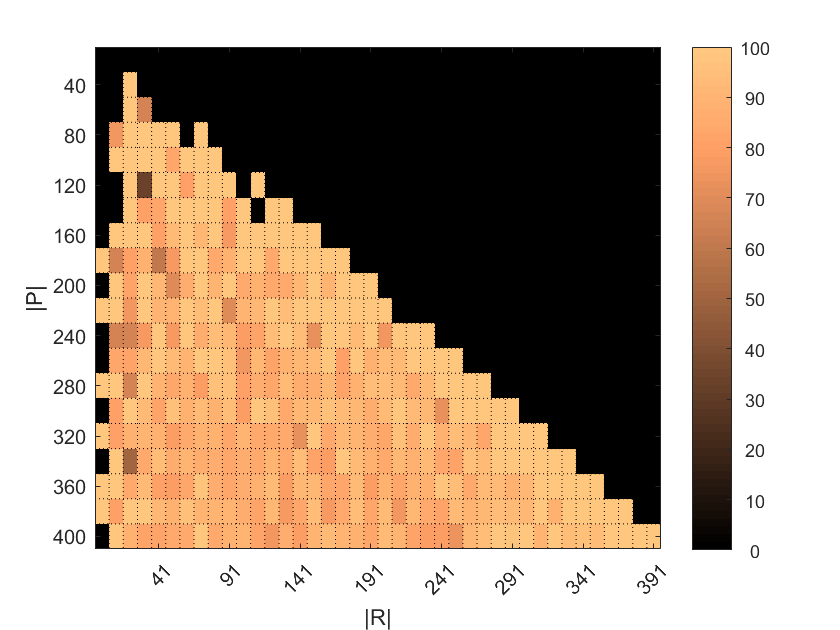}
\par\end{centering}
}
\par\end{centering}
\begin{centering}
\subfloat[Percentage of served riders with two or fewer transfers\label{fig:tr_two}]{\begin{centering}
\includegraphics[width=3in]{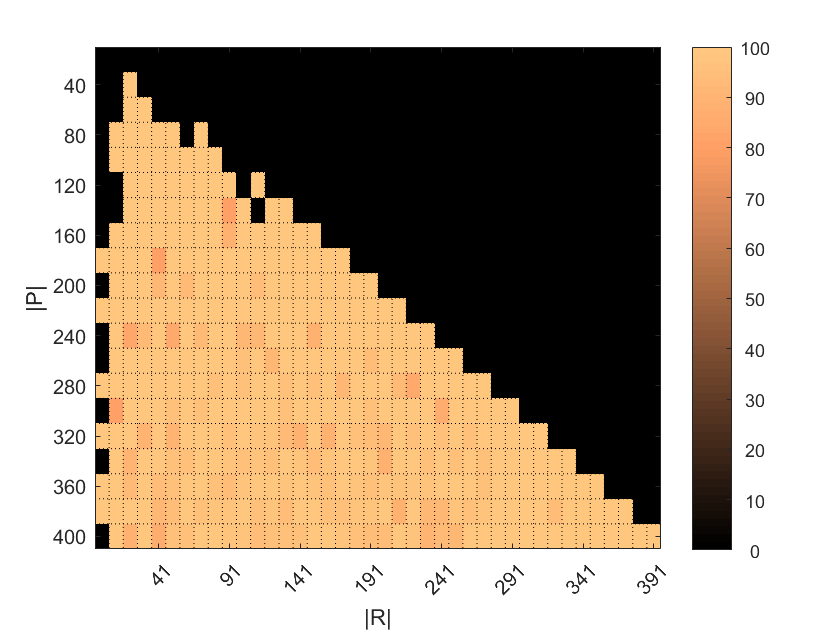}
\par\end{centering}
}\hfill{}\subfloat[Percentage of served riders with three or fewer transfers\label{fig:tr_three}]{\begin{centering}
\includegraphics[width=3in]{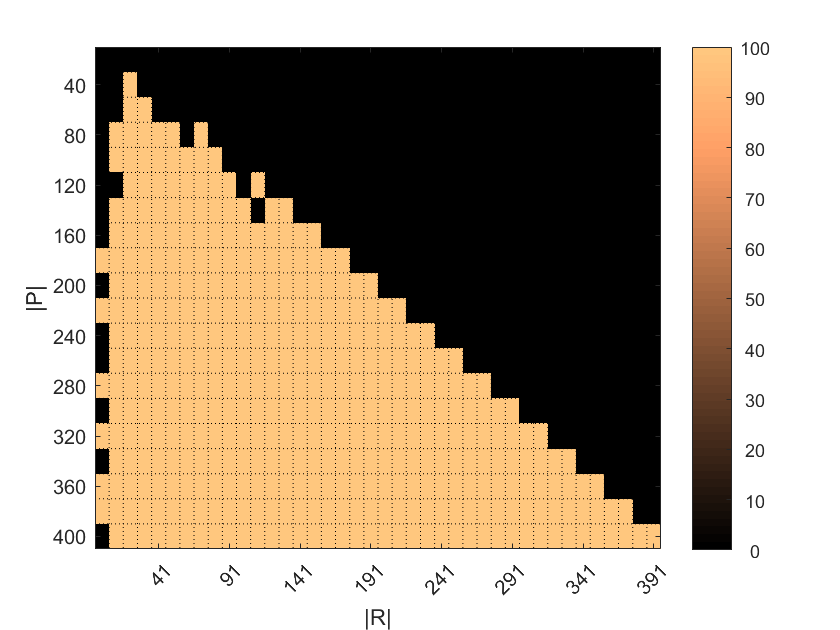}
\par\end{centering}
}
\par\end{centering}
\caption{Distribution of transfers\label{fig:tr}}
\end{figure}

\subsection{Sensitivity Analysis}\label{sec:sensitivity}

In order to perform sensitivity analysis over some of the parameters
of the ride-matching problem, we generate and solve two additional
series of matching problems. All these problems contain 400 participants,
with equal number of riders and drivers. This configuration is selected
because Figures \ref{fig:Served_Riders} and \ref{fig:alg_performance}
suggest that problems with equal number of riders and drivers produce
the highest number of matches and require the most computational
resources. For the following problems, we maximize an objective function
that includes two main terms: sum of the total number of served riders,
($\sum_{r\in R}z_{r}$), and sum of the weighted travel times by riders, $(-\sum_{r\in R}W_{r}\sum_{d\in D:(r,d)\in M}\sum_{\ell=(t_{i},s_{i},t_{j},s_{j})\in L_{rd}}(t_{j}-t_{i})y_{\ell}^{rd})$.
The term $W_{r}=1/(T_{r}^{TB}+1)$ (where $T_{r}^{TB}$ is the maximum
ride time for rider $r$) is considered as the weight for the travel
time of rider $r$ in this objective function to ensure that maximizing
the total number of served riders remains the main priority of the
system. In the extreme scenario under which all riders use their entire travel time budget, this term ensures that the expression $\sum_{r\in R}W_{r}\sum_{d\in D:(r,d)\in M}\sum_{\ell=(t_{i},s_{i},t_{j},s_{j})\in L_{rd}}(t_{j}-t_{i})y_{\ell}^{rd}$ remains less than $1$, prioritizing the objective of maximizing the total number of served riders.

Table \ref{tab:Sensitivity_49} shows the performance of the decomposition
algorithm in solving 5 problem instances in a network with 49 stations. 
The field ``release period'' in this table indicates the period of time
during which the earliest departure times of participants are generated.
For this field, we used two values of 30 and 60 minutes. Problem instances
with release period of 30 minutes incorporate higher temporal proximity
between trips. 

The field ``Dir. trips'' in Table \ref{tab:Sensitivity_49} has
a binary value. Value of zero for this field is used when the trip
origin and destination locations are generated randomly in the network.
In problem instances with value of 1 for this field, we use a clustered
network. In a clustered network, trips experience a higher degree
of spatial proximity. 

Finally, we used two values of 1.1 and 1.2 as participants'
travel time budget factors to represent their flexibility on how much time they are willing to spend in the network. Value of 1.1/1.2 for this parameter indicates that
all participants are assumed to have maximum ride times that are 1.1/1.2
times their shortest path travel times. 

The field ``Optimal'' in Table \ref{tab:Sensitivity_49} indicates
whether a problem was solved to optimality (Y), or we ran out
of memory when solving at least one of the sub-problems (the original
problem in case of a static problem), and had to report the solution
obtained by the heuristic described in section \ref{sub:Heuristic-solutions}.

All these problem instances have been solved as both static problems,
and dynamic problems with different re-optimization periods. For each problem instance, ``NA'' under the ``Re-optimization period''
indicates that the problem is solved in a static setting, assuming
all participants to have registered their trips before the onset of the planning horizon when the matching problem is solved. Other values under this field show the re-optimization period. We assume that the entire demand for the re-optimization period $i$ arrives within the window $[(i-1)k,ik]$, where $k$ is the length of the re-optimization period. In addition, after each re-optimization period, we fix the itineraries of all matched riders and drivers, and include the fixed itineraries of matched drivers in the problems
solved for the following re-optimization periods, if their travel
time windows allow. 

Not surprisingly, static versions of all five problem instances have
the highest matching rates and the largest solution times compared
to their dynamic counterparts, since static problems include all participants.
Following the same logic, given a fixed number of participants, longer
re-optimization periods translate into a larger pool of participants
in the matching problems, and higher matching rates. 

For each problem, we provide statistics on the number of rider transfers
(based on the itineraries of matched riders), and the waiting time
of riders during transfers (based on the itineraries of riders who
experience transfers.) As a general trend, all these values decrease
with the length of the re-optimization period, although the average
waiting time of riders during transfers is in general negligible,
especially given the potential uncertainties in travel times. 

Table \ref{tab:Sensitivity_49} also reports the number of matched
drivers, and the statistics on the extra travel time by the matched drivers
(compared to their shortest path travel times.) This table suggests
that the number of matched drivers decreases with the length of the re-optimization period. However, no monotonic relationship can be observed between the drivers' extra time in the network and the length of the re-optimization period. 

Compared to the first problem instance, problem instances 4, 5, and
6 incorporate higher temporal, spatial, and spatiotemporal proximity
among trips, respectively. Although all these problem instances lead
to higher number of served riders compared to the first problem instance,
it is interesting to note that the effect of higher spatial proximity
on the number of served riders is more significant than that of higher
temporal proximity. An interesting point to notice is that at 5-min
re-optimization periods, all the problem instances can be solved in
less than 1 minute. 

The highest number of served riders is obtained in problem instance
2, where the participants' time budgets is the highest. However, notice
that the measures of quality of service, namely the average extra
travel time for drivers, and the average number of transfers and transfer
wait time for riders, are also considerably higher compared to other
problem instances. The matching rate and quality of service measures
reported on problem instance 3 in which riders are (naturally) assumed
to have a higher maximum ride time than drivers lie between those
of instances 1 and 2.

Table \ref{tab:Sensitivity_100} displays problem instances similar
to the ones in Table \ref{tab:Sensitivity_49} in terms of problem
parameters, but in a network with 100 stations. We should point out
that we could not solve the static version of problem instance 5 to
optimality (ran out of memory), and instead used the heuristic approach
described in section \ref{sub:Heuristic-solutions}. The two numbers
reported under the ``Num. served'' field show the lower and upper
bounds on the optimal solution (the lower bound solution is the feasible
solution that can be eventually used). Note that if such a problem
is encountered by a system due to lack of computational capacity,
they could opt for re-optimizing the system more periodically. 

In general, Table \ref{tab:Sensitivity_100} demonstrates the same
trends as in Table \ref{tab:Sensitivity_49}. The most prominent difference
that can be observed comparing the two tables is that Table \ref{tab:Sensitivity_100}
has fewer number of served riders. This is an expected result, since
by increasing the number of stations and keeping the same number of
participants, we are in fact decreasing the spatial proximity among
trips. In practice, however, increased number of stations could lead
to potentially higher levels of demand, as higher number of stations
translates into higher accessibility to the ridesharing system.

\begin{landscape}
\begin{table}
\footnotesize
\tabcolsep .22pc
\caption{Sensitivity study over the problem parameters. All instances are generated
with 400 participants composing of 200 riders and 200 drivers in a
randomly generated network with 49 stations\label{tab:Sensitivity_49}}
\centering{}%
\begin{tabular}{cccccccccccccc}
\hline 
 &  &  &  &  & \multicolumn{3}{c}{Riders} &  & \multicolumn{2}{c}{Drivers} &  & \multicolumn{2}{c}{Solution}\tabularnewline
\cline{6-8} \cline{10-11} \cline{13-14} 
Problem & Dir. & Release & Time budget & Re-optimization & Num.  & Min\textbackslash{}avg.\textbackslash{}max  & Min\textbackslash{}avg.\textbackslash{}max  &  & Num. & Min\textbackslash{}avg.\textbackslash{}max  &  & time & Optimal\tabularnewline
Instance & trips & period & rider/driver & period  & served & num. transfers & wait in transfer  &  & Involved  & extra time  &  & (sec) & \textit{Y/LB}\tabularnewline
 &  &  &  & (min) &  &  & (min) &  &  & (min) &  &  & \tabularnewline
\hline 
\hline 
$1$ & $0$ & $60$ & $1.1/1.1$ & NA & 69 & 0.0\textbackslash{}0.9\textbackslash{}3.0 & 0.0\textbackslash{}0.6\textbackslash{}3.0 &  & 98 & 1.0\textbackslash{}2.7\textbackslash{}3.0 &  & 518 & \textit{Y}\tabularnewline
$1$ & $0$ & $60$ & $1.1/1.1$ & $10$ & 27 & 0.2\textbackslash{}0.5\textbackslash{}1.0 & 0.0\textbackslash{}0.3\textbackslash{}1.8 &  & 44 & 1.8\textbackslash{}3.0\textbackslash{}3.4 &  & 16 & \textit{Y}\tabularnewline
$1$ & $0$ & $60$ & $1.1/1.1$ & $5$ & 20 & 0.4\textbackslash{}0.4\textbackslash{}0.8 & 0.0\textbackslash{}0.4\textbackslash{}1.2 &  & 36 & 1.1\textbackslash{}2.2\textbackslash{}3.2 &  & 6 & \textit{Y}\tabularnewline
\hline 
$2$ & $0$ & $60$ & $1.2/1.2$ & NA & 150 & 0.0\textbackslash{}2.1\textbackslash{}2.1 & 0.0\textbackslash{}2.5\textbackslash{}4.1 &  & 144 & 0.0\textbackslash{}11.2\textbackslash{}15 &  & 1920 & \textit{Y}\tabularnewline
$2$ & $0$ & $60$ & $1.2/1.2$ & $10$ & 72 & 0.0\textbackslash{}0.6\textbackslash{}1.7 & 0.0\textbackslash{}1.3\textbackslash{}3.1 &  & 105 & 2.9\textbackslash{}8.5\textbackslash{}11 &  & 174 & \textit{Y}\tabularnewline
$2$ & $0$ & $60$ & $1.2/1.2$ & $5$ & 67 & 0.0\textbackslash{}0.4\textbackslash{}1.7 & 0.0\textbackslash{}0.4\textbackslash{}3.3 &  & 87 & 3.6\textbackslash{}6.5\textbackslash{}9.2 &  & 29 & \textit{Y}\tabularnewline
\hline 
$3$ & $0$ & $60$ & $1.2/1.1$ & NA & 112 & 0.0\textbackslash{}0.7\textbackslash{}2.4 & 0.0\textbackslash{}1.5\textbackslash{}3.1 &  & 123 & 0.1\textbackslash{}3.1\textbackslash{}5.4 &  & 718 & \textit{Y}\tabularnewline
$3$ & $0$ & $60$ & $1.2/1.1$ & $10$ & 43 & 0.0\textbackslash{}0.4\textbackslash{}1.2 & 0.0\textbackslash{}0.8\textbackslash{}3.0 &  & 80 & 2.0\textbackslash{}3.0\textbackslash{}5.1 &  & 53 & \textit{Y}\tabularnewline
$3$ & $0$ & $60$ & $1.2/1.1$ & $5$ & 38 & 0.0\textbackslash{}0.4\textbackslash{}0.9 & 0.0\textbackslash{}0.4\textbackslash{}2.1 &  & 67 & 2.6\textbackslash{}4.2\textbackslash{}4.2 &  & 11 & \textit{Y}\tabularnewline
\hline 
$4$ & $0$ & $30$ & $1.1/1.1$ & NA & 87 & 0.0\textbackslash{}0.9\textbackslash{}30 & 0.0\textbackslash{}0.7\textbackslash{}1.5 &  & 115 & 0.75\textbackslash{}2.3\textbackslash{}3.3 &  & 2357 & \textit{Y}\tabularnewline
$4$ & $0$ & $30$ & $1.1/1.1$ & $10$ & 54 & 0.0\textbackslash{}0.7\textbackslash{}2.3 & 0.0\textbackslash{}0.6\textbackslash{}3.3 &  & 80 & 1.7\textbackslash{}3.1\textbackslash{}4.2 &  & 201 & \textit{Y}\tabularnewline
$4$ & $0$ & $30$ & $1.1/1.1$ & $5$ & 48 & 0.0\textbackslash{}0.4\textbackslash{}1.5 & 0.0\textbackslash{}0.5\textbackslash{}3.7 &  & 68 & 0.5\textbackslash{}2.5\textbackslash{}4.3 &  & 12 & \textit{Y}\tabularnewline
\hline 
$5$ & $1$ & $60$ & $1.1/1.1$ & NA & 137 & 0.0\textbackslash{}2.1\textbackslash{}3.0 & 0.0\textbackslash{}1.7\textbackslash{}7.0 &  & 158 & 0.0\textbackslash{}2.4\textbackslash{}8.0 &  & 1104 & \textit{Y}\tabularnewline
$5$ & $1$ & $60$ & $1.1/1.1$ & $10$ & 73 & 0.0\textbackslash{}0.8\textbackslash{}2.3 & 0.0\textbackslash{}0.6\textbackslash{}2.8 &  & 107 & 0.0\textbackslash{}2.4\textbackslash{}7.06 &  & 52 & \textit{Y}\tabularnewline
$5$ & $1$ & $60$ & $1.1/1.1$ & $5$ & 57 & 0.0\textbackslash{}0.6\textbackslash{}1.7 & 0.0\textbackslash{}0.6\textbackslash{}1.8 &  & 85 & 0.3\textbackslash{}3.0\textbackslash{}3.4 &  & 22 & \textit{Y}\tabularnewline
\hline 
$6$ & $1$ & $30$ & $1.1/1.1$ & NA & 152 & 0.0\textbackslash{}2.2\textbackslash{}3.0 & 0.0\textbackslash{}1.3\textbackslash{}8.0 &  & 192 & 0.0\textbackslash{}2.3\textbackslash{}8.0 &  & 2750 & \textit{Y}\tabularnewline
$6$ & $1$ & $30$ & $1.1/1.1$ & $10$ & 116 & 0.0\textbackslash{}0.8\textbackslash{}2.6 & 0.0\textbackslash{}0.5\textbackslash{}3.7 &  & 133 & 0.0\textbackslash{}2.5\textbackslash{}8.0 &  & 142 & \textit{Y}\tabularnewline
$6$ & $1$ & $30$ & $1.1/1.1$ & $5$ & 101 & 0.0\textbackslash{}0.6\textbackslash{}1.5 & 0.0\textbackslash{}0.4\textbackslash{}2.2 &  & 112 & 0.1\textbackslash{}1.9\textbackslash{}6.2 &  & 48 & \textit{Y}\tabularnewline
\hline 
\end{tabular}
\end{table}
\end{landscape}

\begin{landscape}
\begin{table}
\footnotesize
\tabcolsep .22pc
\caption{Sensitivity study over the problem parameters. All instances are generated
with 400 participants composing of 200 riders and 200 drivers in a
randomly generated network with 100 stations\label{tab:Sensitivity_100}}
\centering{}%
\begin{tabular}{cccccccccccccc}
\hline 
 &  &  &  &  & \multicolumn{3}{c}{Riders} &  & \multicolumn{2}{c}{Drivers} &  & \multicolumn{2}{c}{Solution}\tabularnewline
\cline{6-8} \cline{10-11} \cline{13-14} 
Problem & Dir. & Release & Time budget & Re-optimization & Num.  & Min\textbackslash{}avg.\textbackslash{}max  & Min\textbackslash{}avg.\textbackslash{}max  &  & Num. & Min\textbackslash{}avg.\textbackslash{}max  &  & time & Optimal\tabularnewline
Instance & trips & period & rider/driver & period  & served & num. transfers & wait in transfer  &  & Involved  & extra time  &  & (sec) & \textit{Y/LB}\tabularnewline
 &  &  &  & (min) &  &  & (min) &  &  & (min) &  &  & \tabularnewline
\hline 
\hline 
$1$ & $0$ & $60$ & $1.1/1.1$ & NA & 61 & 0.0\textbackslash{}1.8\textbackslash{}2.7 & 0.0\textbackslash{}1.8\textbackslash{}10 &  & 94 & 0.0\textbackslash{}4.4\textbackslash{}7.0 &  & 501 & \textit{Y}\tabularnewline
$1$ & $0$ & $60$ & $1.1/1.1$ & $10$ & 18 & 0.5\textbackslash{}1.1\textbackslash{}1.8 & 0.2\textbackslash{}1.1\textbackslash{}3.7 &  & 39 & 0.5\textbackslash{}5.9\textbackslash{}9.1 &  & 40 & \textit{Y}\tabularnewline
$1$ & $0$ & $60$ & $1.1/1.1$ & $5$ & 11 & 0.3\textbackslash{}0.5\textbackslash{}0.7 & 0.1\textbackslash{}0.4\textbackslash{}0.8 &  & 20 & 0.3\textbackslash{}3.9\textbackslash{}8.2 &  & 36 & \textit{Y}\tabularnewline
\hline 
$2$ & $0$ & $60$ & $1.2/1.2$ & NA & 138 & 0.0\textbackslash{}2.4\textbackslash{}2.5 & 0.3\textbackslash{}2.4\textbackslash{}6.2 &  & 124 & 0.0\textbackslash{}16.2\textbackslash{}20.2 &  & 1339 & \textit{Y}\tabularnewline
$2$ & $0$ & $60$ & $1.2/1.2$ & $10$ & 59 & 0.1\textbackslash{}1.6\textbackslash{}2.1 & 0.0\textbackslash{}1.0\textbackslash{}5.1 &  & 83 & 1.0\textbackslash{}14.3\textbackslash{}19.8 &  & 515 & \textit{Y}\tabularnewline
$2$ & $0$ & $60$ & $1.2/1.2$ & $5$ & 39 & 0.0\textbackslash{}0.4\textbackslash{}0.9 & 0.1\textbackslash{}0.9\textbackslash{}3.9 &  & 63 & 2.1\textbackslash{}12.2\textbackslash{}18.2 &  & 316 & \textit{Y}\tabularnewline
$2$ & $0$ & $60$ & $1.2/1.2$ & $2$ & 32 & 0.0\textbackslash{}0.3\textbackslash{}0.6 & 0.0\textbackslash{}0.7\textbackslash{}3.1 &  & 40 & 0.1\textbackslash{}8.1\textbackslash{}14.0 &  & 115 & \textit{Y}\tabularnewline
\hline 
$3$ & $0$ & $60$ & $1.1/1.2$ & NA & 85 & 0.1\textbackslash{}1.8\textbackslash{}2.5 & 0.3\textbackslash{}2.1\textbackslash{}4.1 &  & 102 & 0.0\textbackslash{}2.1\textbackslash{}6.2 &  & 1020 & \textit{Y}\tabularnewline
$3$ & $0$ & $60$ & $1.1/1.2$ & $10$ & 35 & 0.3\textbackslash{}1.5\textbackslash{}1.9 & 0.1\textbackslash{}1.2\textbackslash{}4.0 &  & 49 & 0.2\textbackslash{}3.9\textbackslash{}7.1 &  & 180 & \textit{Y}\tabularnewline
$3$ & $0$ & $60$ & $1.1/1.2$ & $5$ & 21 & 0.2\textbackslash{}0.4\textbackslash{}1.1 & 0.1\textbackslash{}0.8\textbackslash{}3.8 &  & 30 & 0.1\textbackslash{}3.9\textbackslash{}7.3 &  & 115 & \textit{Y}\tabularnewline
\hline 
$4$ & $0$ & $30$ & $1.1/1.1$ & $2$ & 36 & 0.2\textbackslash{}1.0\textbackslash{}1.5 & 0.2\textbackslash{}0.8\textbackslash{}2.0 &  & 63 & 0.1\textbackslash{}4.5\textbackslash{}8.2 &  & 866 & \textit{Y}\tabularnewline
$4$ & $0$ & $30$ & $1.1/1.1$ & $10$ & 23 & 0.1\textbackslash{}0.8\textbackslash{}1.7 & 0.3\textbackslash{}1.7\textbackslash{}3.4 &  & 42 & 0.1\textbackslash{}4.1\textbackslash{}9.1 &  & 339 & \textit{Y}\tabularnewline
$4$ & $0$ & $30$ & $1.1/1.1$ & $5$ & 15 & 0.0\textbackslash{}0.4\textbackslash{}0.7 & 0.3\textbackslash{}0.7\textbackslash{}2.3 &  & 23 & 0.1\textbackslash{}3.2\textbackslash{}7.6 &  & 31 & \textit{Y}\tabularnewline
\hline 
$5$ & $1$ & $60$ & $1.1/1.1$ & NA & 132 & 0.5\textbackslash{}2.4\textbackslash{}2.9 & 0.4\textbackslash{}1.9\textbackslash{}4.1 &  & 145 & 0.0\textbackslash{}6.7\textbackslash{}8.9 &  & 856 & \textit{Y}\tabularnewline
$5$ & $1$ & $60$ & $1.1/1.1$ & $10$ & 29 & 0.1\textbackslash{}0.8\textbackslash{}1.2 & 0.0\textbackslash{}0.7\textbackslash{}2.1 &  & 49 & 0.1\textbackslash{}4.0\textbackslash{}7.1 &  & 124 & \textit{Y}\tabularnewline
$5$ & $1$ & $60$ & $1.1/1.1$ & $5$ & 19 & 0.1\textbackslash{}0.7\textbackslash{}0.9 & 0.1\textbackslash{}0.4\textbackslash{}1.9 &  & 32 & 0.2\textbackslash{}4.5\textbackslash{}7.1 &  & 30 & \textit{Y}\tabularnewline
\hline 
$6$ & $1$ & $30$ & $1.1/1.1$ & NA & [78 128] & 0.4\textbackslash{}2.4\textbackslash{}2.7 & 0.1\textbackslash{}1.3\textbackslash{}3.8 &  & 161 & 0.2\textbackslash{}6.2\textbackslash{}10.1 &  & 753 & \textit{LB}\tabularnewline
$6$ & $1$ & $30$ & $1.1/1.1$ & $10$ & 61 & 0.1\textbackslash{}1.0\textbackslash{}1.2 & 0.1\textbackslash{}1.0\textbackslash{}2.1 &  & 99 & 0.3\textbackslash{}4.9\textbackslash{}6.9 &  & 2140 & \textit{Y}\tabularnewline
$6$ & $1$ & $30$ & $1.1/1.1$ & $5$ & 40 & 0.1\textbackslash{}0.8\textbackslash{}1.2 & 0.1\textbackslash{}0.8\textbackslash{}1.9 &  & 64 & 0.1\textbackslash{}2.7\textbackslash{}4.5 &  & 447 & \textit{Y}\tabularnewline
$6$ & $1$ & $30$ & $1.1/1.1$ & $2$ & 28 & 0.0\textbackslash{}0.5\textbackslash{}0.7 & 0.1\textbackslash{}0.7\textbackslash{}1.5 &  & 42 & 0.0\textbackslash{}1.0\textbackslash{}3.1 &  & 123 & \textit{Y}\tabularnewline
\hline 
\end{tabular}
\end{table}
\end{landscape}


\subsubsection{Maximum Ride Time}

Since drivers in a risharing system are traveling based on their own schedules, it is safe to assume that they are less likely to have significant travel time budget factors (i.e., the total time one is willing to spend in the network as a percentage of one's shortest path travel time). Riders in a ridesharing system, on the other hand, are more likely to consider rather larger travel time budget factors due to their possible lack of access to a private vehicle and/or the low frequency and potentially longer travel times of public transportation.
It is reasonable to assume that increasing the travel time budget factor of participants would improve the matching rate by allowing larger detours from one's shortest path and more flexibility in one's  departure and arrival times, but it would at the same time increase solution times as a result of larger link sets in the optimization problem. Figure \ref{fig:budget} demonstrates this trade-off.  

All the problem instances in Figure \ref{fig:budget} include 200 riders and 200 drivers. The travel time budget factor of 1.1 is considered for drivers, and the rest of the parameter values are the default values from section \ref{sec:numericalstudy}.
Figure \ref{fig:budget} suggests that as we increase the rider travel time budget factor, both the matching rate and solution time demonstrate an increasing trend. While solution times increase exponentially with riders' travel time budget factor, the increase in the matching rate seems to have a higher rate initially, up to the travel budget factor of 1.3, after which the rate of increase in the matching rate decreases. This is due to the fact that drivers' availabilty in the network is bounded by their rather small travel time budget factor of 1.1, and although increasing the riders' travel time budget factor helps improve the matching rate to some degree, higher improvement in the matching rate can be obtained by increasing driver-hours spent in the network (e.g., if current drivers stay in the network for longer and/or more drivers join the system.)

\begin{figure}
\begin{centering}
\includegraphics[width=3in]{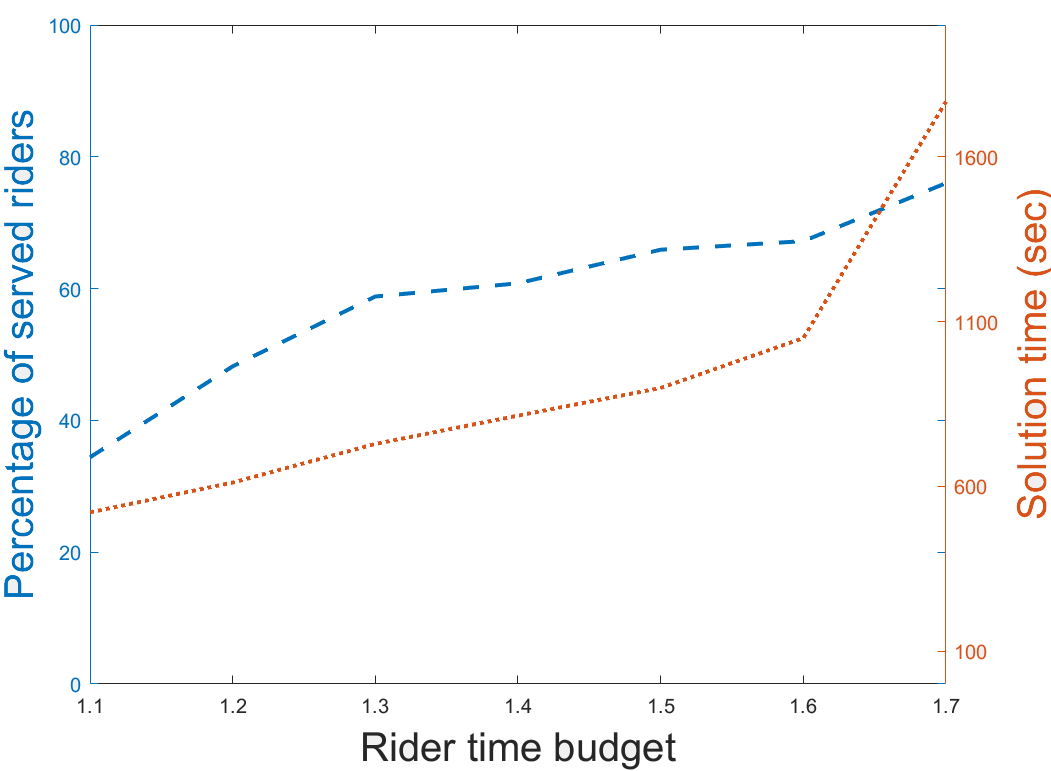}
\par\end{centering}
\caption{Impact of rider and driver maximum ride times on the matching rate and solution time\label{fig:budget}}
\end{figure}

In the following section, we present methods to make the solution
algorithm more appropriate for dynamic implementations by reducing
the solution times even further. \FloatBarrier

\section{Application in Practice}

The scalability of the solution algorithm to solve the ride-matching
problem may become an issue in practice. Larger area of coverage by
ridesharing companies leads to larger number of stations, and potentially
larger number of participants. Larger problems can lead to higher
solution times and reduce the effectiveness of the proposed algorithm
in dynamic applications. 

Forming reduced graphs for participants can address this problem to
a large extent. Even though a ridesharing company could cover a very
large area, the distance traveled by participants is usually within
a limited (but not necessarily small) circumference of their homes,
i.e., within city limits, and therefore their reduced graphs contain
a limited number of stations. It is only in rarer cases that people
might require to, or be willing to, travel longer distances in a dynamic
rideshare system. Thus it may be expected that larger reduced graphs
would be rarer.

In order for the ridesharing system to cover longer (such as inter-city)
trips, the system can locate one or two important stations in each
city, and use only those set of stations for riders with long-distance
trips. In addition, drivers who are selected to be included in the
ride-matching problems with such riders can be limited to those who
are making inter-city trips themselves. Using drivers with shorter
trip lengths in such settings would not be practical due to the large
number of transfers that forming an itinerary with short-distance
drivers would require. 

In the interest of lower solution times, in this section we suggest
two different approaches to simplify the problem and obtain high
quality heuristic solutions to the ride-matching problem within a
reasonable time period.

\subsection{Restricting the Number of Transfer Stations}

In the ride-matching problem in model (\ref{equ:determinstic}), itineraries
are devised assuming transfers can occur at any station. In practice,
we can identify stations where transfers are more likely to happen,
and limit the set of transfer stations. Let us denote by $S_{T}$
the set of stations where transfers are allowed. The flow conservation
constraint (\ref{eq:r_balance}) should then be replaced by two sets
of constraints, (\ref{eq:Transfer1}) and (\ref{eq:Transfer2}). Constraint
set (\ref{eq:Transfer1}) and (\ref{eq:Transfer2}) both model flow
conservation. Constraint set (\ref{eq:Transfer1}) covers stations
where transfers are allowed, and (\ref{eq:Transfer2}) covers the
rest of the stations. 

\begin{equation}
\!\!\!\sum_{^{d\in D^{\prime}}}\!\!\!\sum_{^{_{\ell=(t_{i},s_{i},t,s)\in L}^{\qquad t_{i},s_{i}:}}}\!\!\!\!\!y_{\ell}^{rd}=\sum_{^{\quad d\in D^{\prime}}}\!\!\!\sum_{^{_{\ell=(t,s,t_{j},s_{j})\in L}^{\qquad t_{j},s_{j}:}}}\!\!\!\!\!y_{\ell}^{rd}\quad\quad\begin{array}{l}
\forall r\in R\\
\forall t\in T_{r}\\
\forall s\in S_{T}\backslash\{OS_{r}\cup DS_{r}\}
\end{array}\label{eq:Transfer1}
\end{equation}

\begin{equation}
\!\!\!\sum_{^{\:\:\:_{\ell=(t_{i},s_{i},t,s)\in L}^{\:\:\:\:\:\:\:\:\:t_{i},s_{i}:}}}\!\!\!\!\!y_{\ell}^{rd}=\!\!\!\sum_{^{_{\:\:\:\ell=(t,s,t_{j},s_{j})\in L}^{\:\:\:\:\:\:\:\:\:\:\:\:\:t_{j},s_{j}:}}}\!\!\!\!\!y_{\ell}^{rd}\quad\quad\quad\quad\qquad\qquad\begin{array}{l}
\forall r\in R,\forall d\in D\\
\forall t\in T_{r}\\
\forall s\in S\backslash\{S_{T}\cup OS_{r}\cup DS_{r}\}
\end{array}\label{eq:Transfer2}
\end{equation}

To study the impact of limiting transfers to certain stations on the
quality of the solutions and solution times, we experimented with
the 420 randomly generated problem instances. For each problem instance,
we studied the solution to model (\ref{equ:determinstic}) where all
stations are transfer stations, and identified the top 80\% of stations
where transfers occur. We then limited transfers to those stations
only. Figure \ref{fig:Stations} shows the ratio of reduction in solution
times (new solution times divided by the old solution times). The
computational time reduction can range from 0 to 60\%. It is interesting,
however, to observe that the higher savings in solution times are
obtained for problem instances with higher original solution times
(see Figure \ref{fig:time_MIP}). 

In terms of solution quality, number of served riders using the top
80\% transfer stations remained on average within 95\% of the number
of served riders in the original solution, suggesting that the trade-off
between the solution time and quality in dynamic settings may be in
favor of restricting the number of transfer stations. These results
are similar to the results found by \citeauthor{masson2014optimization}~(2014b).
In their study of a multi-modal system designed to carry goods using
a combination of excess bus capacities and city freighters, they found
that 60\% of the transfers took place in two bus stops that were closer
to a large body of customers. 

\begin{figure}
\begin{centering}
\includegraphics[width=3.3in]{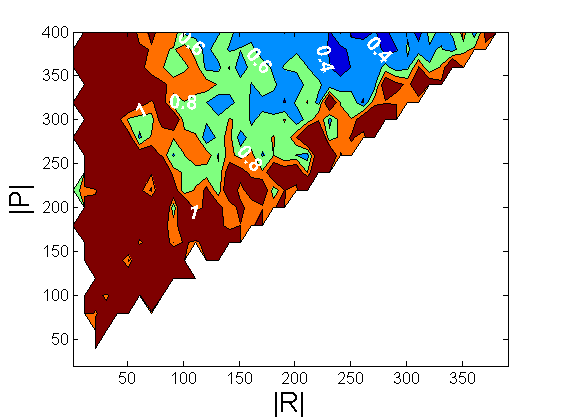}
\par\end{centering}
\caption{Ratio of new to old solution times after limiting the number of transfer
stations\label{fig:Stations}}
\end{figure}

\FloatBarrier

\subsection{Selecting a time interval}

In the numerical experiments in this paper, we discretized the study
time horizon into 1-minute time intervals. In this section, we study
the impact of using larger time intervals on the solution times, and
discuss when using larger time intervals is appropriate. 

The impact of using a larger time interval on solution times is not
obvious without running experiments. On one hand, using a larger time
interval will lead to smaller link sets produced by the pre-processing
procedure. On the other hand, using a larger time interval causes
a higher degree of temporal proximity among trips, which can
lead to higher solution times. Figure \ref{fig:Solution-times-dt5}
shows the solution times for the 420 randomly generated problems,
solved using the decomposition algorithm with 5-minute time intervals.
This figure suggests that utilizing larger time intervals leads to
lower solution times. It should be noted that as participants'
maximum ride times become larger, so do the size of the link sets, and therefore
the savings in solution times due to incorporating higher time intervals
become more significant. Using larger time intervals, in addition to yielding lower solution times, can make it less probable for travel time uncertainties to cause individuals to miss their rides, especially during transfers.

Another issue that should be discussed is under what circumstances it is appropriate to use larger time intervals. In practice, individuals
tend to report their schedules in five-minute increments at the finest,
i.e., it is not typical to hear an individual planning to leave no
later than 2:21 P.M., and therefore in general using larger time intervals
(e.g., 5 min intervals) seems appropriate. For a ridesharing system
that is integrated with a high-frequency transit system with minute-level
time schedule, however, using smaller time intervals might be more
suitable.

\begin{figure}
\begin{centering}
\includegraphics[width=3.3in]{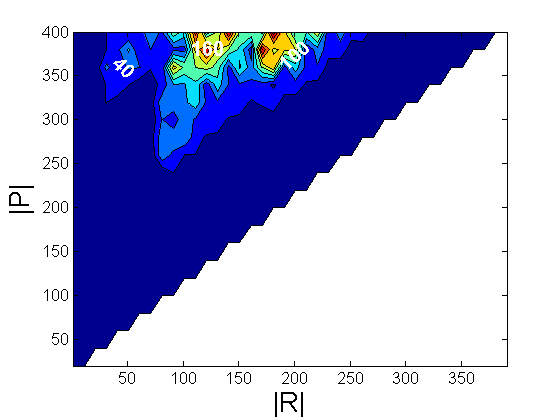}
\par\end{centering}

\caption{Solution times with 5-min time intervals\label{fig:Solution-times-dt5}}
\end{figure}

\subsection{Heuristic solutions\label{sub:Heuristic-solutions}}

In each iteration of the decomposition algorithm, a solution to the
original ride-matching problem is available. In section \ref{sub:Bounds},
we saw that we can obtain a heuristic feasible solution in each iteration
of the decomposition algorithm by solving a set packing problem. In
this section we study the impact of stopping the decomposition algorithm
at certain time limits and retrieving the best heuristic solution.
The results are displayed in Figure \ref{fig:Heuristic}. 

Figure \ref{fig:Heuristic} suggests that the additional number of
served riders obtained from allowing the computations to continue beyond
240 seconds is negligible. Comparing Figures \ref{fig:Heuristic}
and \ref{fig:Multi-Hop} suggests that even at a computational cut-off
time of 120 seconds, the number of served riders is higher than or
comparable to the number of served riders in the other four ride-matching
methods described in section \ref{sub:Value-of_Multihop}. In fact,
for several of the problems, even 60 seconds is sufficient. The fact
that computations can be accomplished in a mere 2-minute period again
validates the applicability of the optimizations schemes in dynamic
settings.

\begin{figure}
\begin{centering}
\includegraphics[width=3.5in]{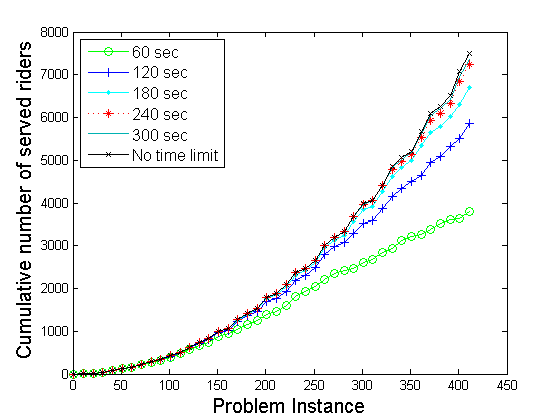}
\par\end{centering}

\caption{Quality of heuristic solutions\label{fig:Heuristic}}
\end{figure}

\section{Conclusion}

In this paper, we formulated the multi-hop P2P ride-matching problem
as a binary program. The formulation has a very tight linear relaxation,
and its structure limits the necessity for branching only to very
specific cases. We proposed a pre-processing procedure to reduce the
size of the input sets to this optimization problem, and an exact
and finite decomposition algorithm to solve the problem by iteratively
solving smaller sub-problems. Sub-problems in each iteration are independent
of each other, making it possible for computations to be done in parallel.
The computational savings achieved by the proposed algorithm can make
it applicable for real-time applications. Furthermore, the fact that
the proposed algorithm allows for as many transfers as necessary makes
it suitable for use in multi-modal networks.

We conducted extensive numerical experiments to study the impact of
multi-hop ridesharing on the number of served riders by comparing
matching rates of five different matching algorithms with different
degrees of flexibility. These experiments demonstrated the higher efficiency and throughput of a multi-hop ridesharing system. Another interesting
observation was that the value of a multi-hop solution becomes more
prominent when driver routes are determined by the system, and are
not pre-specified. Furthermore, we conducted sensitivity study over
different parameters, including the ratio of riders and drivers in
the system, the spatiotemporal distribution of trips, and the maximum
ride times requested by individuals.

We discussed different approaches, including clustering of trips in
time and space and extracting heuristic solutions in each iteration
of the algorithm, to make the proposed methodology applicable to dynamic,
real-life sized problems. It is hoped that the algorithms presented
and their variants will be applied in the near future to a variety
of socially and economically beneficial ridesharing systems, and perhaps
even to the scenarios of shared rides in automated or driverless vehicles
that are further distant in the future. 


\bibliographystyle{chicago}
\bibliography{refs}

\begin{thebibliography}{}

\bibitem[\protect\citeauthoryear{Agatz, Erera, Savelsbergh, and Wang}{Agatz
  et~al.}{2010}]{Agatz}
Agatz, N., A.~Erera, M.~Savelsbergh, and X.~Wang (2010).
\newblock Sustainable passenger transportation: Dynamic ride-sharing.

\bibitem[\protect\citeauthoryear{Agatz, Erera, Savelsbergh, and Wang}{Agatz
  et~al.}{2012}]{agatz2012optimization}
Agatz, N., A.~Erera, M.~Savelsbergh, and X.~Wang (2012).
\newblock Optimization for dynamic ride-sharing: A review.
\newblock {\em European Journal of Operational Research\/}~{\em 223\/}(2),
  295--303.

\bibitem[\protect\citeauthoryear{Agatz, Erera, Savelsbergh, and Wang}{Agatz
  et~al.}{2011}]{Agatz_Rider_or_Driver}
Agatz, N.~A., A.~L. Erera, M.~W. Savelsbergh, and X.~Wang (2011).
\newblock Dynamic ride-sharing: A simulation study in metro atlanta.
\newblock {\em Transportation Research Part B: Methodological\/}~{\em 45\/}(9),
  1450--1464.

\bibitem[\protect\citeauthoryear{Baldacci, Maniezzo, and Mingozzi}{Baldacci
  et~al.}{2004}]{carpool_exact}
Baldacci, R., V.~Maniezzo, and A.~Mingozzi (2004).
\newblock An exact method for the car pooling problem based on lagrangean
  column generation.
\newblock {\em Operations Research\/}~{\em 52\/}(3), pp. 422--439.

\bibitem[\protect\citeauthoryear{Braekers, Caris, and Janssens}{Braekers
  et~al.}{2014}]{MD_H_DARP2}
Braekers, K., A.~Caris, and G.~K. Janssens (2014).
\newblock Exact and meta-heuristic approach for a general heterogeneous
  dial-a-ride problem with multiple depots.
\newblock {\em Transportation Research Part B: Methodological\/}~{\em 67\/}(0),
  166 -- 186.

\bibitem[\protect\citeauthoryear{Carnes, Henderson, Shmoys, Ahghari, and
  MacDonald}{Carnes et~al.}{2013}]{MD_H_DARP1}
Carnes, T.~A., S.~G. Henderson, D.~B. Shmoys, M.~Ahghari, and R.~D. MacDonald
  (2013).
\newblock Mathematical programming guides air-ambulance routing at {Ornge}.
\newblock {\em Interfaces\/}~{\em 43\/}(3), 232--239.

\bibitem[\protect\citeauthoryear{Coltin and Veloso}{Coltin and
  Veloso}{2014}]{coltin2014ridesharing}
Coltin, B. and M.~Veloso (2014).
\newblock Ridesharing with passenger transfers.
\newblock In {\em Intelligent Robots and Systems (IROS 2014), 2014 IEEE/RSJ
  International Conference on}, pp.\  3278--3283. IEEE.

\bibitem[\protect\citeauthoryear{Cordeau and Laporte}{Cordeau and
  Laporte}{2007a}]{2007_Multi_Depot1}
Cordeau, J.-F. and G.~Laporte (2007a).
\newblock The dial-a-ride problem: models and algorithms.
\newblock {\em Annals of Operations Research\/}~{\em 153\/}(1), 29--46.

\bibitem[\protect\citeauthoryear{Cordeau and Laporte}{Cordeau and
  Laporte}{2007b}]{Overview_DARP_2007}
Cordeau, J.-F. and G.~Laporte (2007b).
\newblock The dial-a-ride problem: models and algorithms.
\newblock {\em Annals of Operations Research\/}~{\em 153\/}(1), 29--46.

\bibitem[\protect\citeauthoryear{Cort{\'e}s, Matamala, and Contardo}{Cort{\'e}s
  et~al.}{2010}]{cortes2010pickup}
Cort{\'e}s, C.~E., M.~Matamala, and C.~Contardo (2010).
\newblock The pickup and delivery problem with transfers: Formulation and a
  branch-and-cut solution method.
\newblock {\em European Journal of Operational Research\/}~{\em 200\/}(3),
  711--724.

\bibitem[\protect\citeauthoryear{Di~Febbraro, Gattorna, and Sacco}{Di~Febbraro
  et~al.}{2013}]{Febbraro}
Di~Febbraro, A., E.~Gattorna, and N.~Sacco (2013).
\newblock Optimizing dynamic ride-sharing systems.
\newblock {\em Presented in the TRB 2013 annual meeting\/}.

\bibitem[\protect\citeauthoryear{Furuhata, Dessouky, Ord{\'o}{\~n}ez, Brunet,
  Wang, and Koenig}{Furuhata et~al.}{2013}]{furuhata2013ridesharing}
Furuhata, M., M.~Dessouky, F.~Ord{\'o}{\~n}ez, M.-E. Brunet, X.~Wang, and
  S.~Koenig (2013).
\newblock Ridesharing: The state-of-the-art and future directions.
\newblock {\em Transportation Research Part B: Methodological\/}~{\em 57},
  28--46.

\bibitem[\protect\citeauthoryear{Ghoseiri}{Ghoseiri}{2013}]{Ghoseiri}
Ghoseiri, K. (2013).
\newblock Dynamic rideshare optimized matching problem.
\newblock {\em Ph.D. Dissertation at University of Maryland\/}.

\bibitem[\protect\citeauthoryear{Heinrich}{Heinrich}{2010}]{heinrich2010implementing}
Heinrich, S. (2010).
\newblock Implementing real-time ridesharing in the san francisco bay area.
\newblock {\em Ph.D. Dissertation at San Jose State University\/}.

\bibitem[\protect\citeauthoryear{Herbawi and Weber}{Herbawi and
  Weber}{2011a}]{Herbawi_1RMD_1}
Herbawi, W. and M.~Weber (2011a).
\newblock Ant colony vs. genetic multiobjective route planning in dynamic
  multi-hop ridesharing.
\newblock {\em in Tools with Artificial Intelligence (ICTAI), 2011 IEEE
  Conference\/}.

\bibitem[\protect\citeauthoryear{Herbawi and Weber}{Herbawi and
  Weber}{2011b}]{Herbawi_1RMD_2}
Herbawi, W. and M.~Weber (2011b).
\newblock Comparison of multiobjective evolutionary algorithm for solving the
  multiobjective route planning in dynamic multi-hop ridesharing.
\newblock {\em Proceedings of the 11th European conference on Evolutionary
  computation in combinatorial optimization, Torino, Italy\/}, 84--95.

\bibitem[\protect\citeauthoryear{Herbawi and Weber}{Herbawi and
  Weber}{2012}]{Herbawi1}
Herbawi, W. and M.~Weber (2012).
\newblock A genetic and insertion heuristic algorithm for solving the dynamic
  ridematching problem with time windows.
\newblock {\em In Proceedings of the fourteenth international conference on
  Genetic and evolutionary computation conference\/}.

\bibitem[\protect\citeauthoryear{Hosni, Naoum-Sawaya, and Artail}{Hosni
  et~al.}{2014}]{hosni2014shared}
Hosni, H., J.~Naoum-Sawaya, and H.~Artail (2014).
\newblock The shared-taxi problem: Formulation and solution methods.
\newblock {\em Transportation Research Part B: Methodological\/}~{\em 70},
  303--318.

\bibitem[\protect\citeauthoryear{Jaw, Odoni, Psaraftis, and Wilson}{Jaw
  et~al.}{1986}]{DARPTW1}
Jaw, J.-J., A.~R. Odoni, H.~N. Psaraftis, and N.~H. Wilson (1986).
\newblock A heuristic algorithm for the multi-vehicle advance request
  dial-a-ride problem with time windows.
\newblock {\em Transportation Research Part B: Methodological\/}~{\em 20\/}(3),
  243 -- 257.

\bibitem[\protect\citeauthoryear{Liaw, White, and Bander}{Liaw
  et~al.}{1996}]{DARP_transfer_bimodal}
Liaw, C.-F., C.~C. White, and J.~Bander (1996).
\newblock A decision support system for the bimodal dial-a-ride problem.
\newblock {\em Systems, Man and Cybernetics, Part A: Systems and Humans, IEEE
  Transactions on\/}~{\em 26\/}(5), 552--565.

\bibitem[\protect\citeauthoryear{Masson, Lehu\'ed\'e, and P\'eton}{Masson
  et~al.}{014a}]{DARP_Transfer1}
Masson, R., F.~Lehu\'ed\'e, and O.~P\'eton (2014a).
\newblock The dial-a-ride problem with transfers.
\newblock {\em Computers \& Operations Research\/}~{\em 41\/}(0), 12 -- 23.

\bibitem[\protect\citeauthoryear{Masson, Trentini, Lehu{\'e}d{\'e},
  Malh{\'e}n{\'e}, P{\'e}ton, and Tlahig}{Masson
  et~al.}{014b}]{masson2014optimization}
Masson, R., A.~Trentini, F.~Lehu{\'e}d{\'e}, N.~Malh{\'e}n{\'e}, O.~P{\'e}ton,
  and H.~Tlahig (2014b).
\newblock Optimization of a city logistics transportation system with mixed
  passengers and goods.
\newblock {\em EURO Journal on Transportation and Logistics\/}, 1--29.

\bibitem[\protect\citeauthoryear{Psaraftis}{Psaraftis}{1983}]{DARPTW2}
Psaraftis, H.~N. (1983).
\newblock An exact algorithm for the single vehicle many-to-many dial-a-ride
  problem with time windows.
\newblock {\em Transportation Science\/}~{\em 17\/}(3), 351--357.

\bibitem[\protect\citeauthoryear{Savelsbergh and Sol}{Savelsbergh and
  Sol}{1995}]{1995_GPDP}
Savelsbergh, M. W.~P. and M.~Sol (1995).
\newblock The general pickup and delivery problem.
\newblock {\em Transportation Science\/}~{\em 29\/}(1), 17--29.

\bibitem[\protect\citeauthoryear{Stein}{Stein}{1978}]{DARP_Transfer2}
Stein, D.~M. (1978).
\newblock Scheduling dial-a-ride transportation systems.
\newblock {\em Transportation Science\/}~{\em 12\/}(3), 232--249.

\bibitem[\protect\citeauthoryear{Stiglic, Agatz, Savelsbergh, and
  Gradisar}{Stiglic et~al.}{2015}]{stiglic2015benefits}
Stiglic, M., N.~Agatz, M.~Savelsbergh, and M.~Gradisar (2015).
\newblock The benefits of meeting points in ride-sharing systems.
\newblock {\em ERIM Report Series Reference No. ERS-2015-003-LIS\/}.

\bibitem[\protect\citeauthoryear{Taylor, Iseki, Miller, and Smart}{Taylor
  et~al.}{2009}]{taylor2009thinking}
Taylor, B., H.~Iseki, M.~A. Miller, and M.~Smart (2009).
\newblock Thinking outside the bus: Understanding user perceptions of waiting
  and transferring in order to increase transit use.

\bibitem[\protect\citeauthoryear{U.S. Department~of Transportation}{U.S.
  Department~of Transportation}{2009}]{NHTS}
U.S. Department~of Transportation, F. H.~A. (2009).
\newblock Ridesharing in north america: Past, present, and future.
\newblock pp.\ ~2.

\bibitem[\protect\citeauthoryear{{Wolfler Calvo}, de~Luigi, Haastrup, and
  Maniezzo}{{Wolfler Calvo} et~al.}{2004}]{2004_carpool_hueristic}
{Wolfler Calvo}, R., F.~de~Luigi, P.~Haastrup, and V.~Maniezzo (2004).
\newblock A distributed geographic information system for the daily car pooling
  problem.
\newblock {\em Computers \& Operations Research\/}~{\em 31\/}(13), 2263 --
  2278.

\end{thebibliography}

\appendix

\section{Proof of Proposition\label{sec:Proof-of-Proposition}}
\begin{prop}
Number of connections for rider $r$ can be calculated using term
$\sum_{d\in D}u_{r}^{d}-1$. \end{prop}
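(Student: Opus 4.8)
The plan is to prove the identity in two moves: first show that $\sum_{d\in D}u_r^d$ counts exactly the distinct real drivers carrying rider $r$, and then show that a matched rider served by $n$ such drivers makes precisely $n-1$ connections. I would begin by invoking Proposition~2 (equivalently, constraint sets (\ref{eq:ident_driver_1}) and (\ref{eq:ident_driver_2})), which together force $u_r^d=1$ exactly when $\sum_{\ell\in L}y_\ell^{rd}\geq 1$, i.e.\ when driver $d$ carries $r$ on at least one link, and $u_r^d=0$ otherwise. Summing this indicator over $d\in D$ then gives that $\sum_{d\in D}u_r^d$ equals the number of distinct members of $D$ appearing in rider $r$'s itinerary. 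I would note that the statement is meaningful only for a matched rider ($z_r=1$), for whom at least one real driver is used (any genuine origin--destination trip requires real movement), so that $\sum_{d\in D}u_r^d\geq 1$ and the expression $\sum_{d\in D}u_r^d-1$ is nonnegative.

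Next I would use the flow-conservation constraints (\ref{eq:r_orig})--(\ref{eq:r_balance}) together with the integrality of the $y$-variables to argue that a matched rider's trajectory decomposes into a single time-monotone path from $OS_r$ to $DS_r$ (the time-expanded network is acyclic, so the unit integral flow admits no cycles), each link of which is carried by exactly one member of $D'=D\cup\{d'\}$, since the binary node-balance equations force the rider-flow on each traversed link to be carried by a single driver. Reading the carriers off along this path yields an ordered list of real drivers, possibly separated by dummy-driver links $(t,s,t+1,s)$ that merely represent waiting at a station. A connection is, by the definition underlying constraint (\ref{eq:transfer}), a switch from one real driver to a \emph{different} real driver; crucially the sum in the proposition ranges over $D$ and excludes $d'$, so the waiting links contribute nothing and never inflate the count. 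Consequently the number of connections equals the number of real-driver segments along the path minus one, which is $\sum_{d\in D}u_r^d-1$.

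The step I expect to be the main obstacle is justifying that each distinct real driver is boarded only once, so that the number of vehicle changes is exactly (number of distinct drivers) $-1$ and not something larger. A priori the forward-in-time structure permits a rider to alight from driver $d$ and later re-board the same $d$ (for instance after $d$'s route returns to a station where the rider has waited), in which case $\sum_{d\in D}u_r^d$ would undercount the true number of transfers and the identity would fail. I would close this gap by arguing that such re-boarding itineraries may be excluded without loss for the matching objective: one may restrict attention to itineraries in which no driver carries the rider on two non-contiguous segments, since any re-boarding itinerary either leaves $z_r$ and the served-rider objective unchanged (so an equivalent single-boarding itinerary may be substituted) or is strictly dominated under the tie-breaking term that minimizes $\sum_{d\in D}u_r^d$. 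With single-boarding secured, each of the $\sum_{d\in D}u_r^d$ drivers contributes exactly one contiguous segment, the $n-1$ inter-segment switches are precisely the connections, and the identity follows.
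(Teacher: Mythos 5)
Your overall route is the paper's: you read $u_r^d$ as the indicator that driver $d$ carries $r$ on at least one link (via constraints (\ref{eq:ident_driver_1})--(\ref{eq:ident_driver_2})), so that $\sum_{d\in D}u_r^d$ counts distinct real drivers with the dummy driver excluded, and you correctly isolate the single failure mode --- a rider alighting from a driver and later re-boarding that same driver, which would make $\sum_{d\in D}u_r^d-1$ undercount the changes of vehicle --- and, like the paper, you invoke the tie-breaking term $-W_r\sum_{d\in D}u_r^d$ in (\ref{eq:Obj}) to exclude it at optimality. The gap is in \emph{how} you exclude it. Your closing dichotomy, ``any re-boarding itinerary either leaves $z_r$ unchanged (so an equivalent single-boarding itinerary may be substituted) or is strictly dominated under the tie-breaking term,'' is circular at exactly the decisive point: both horns presuppose that a feasible single-boarding substitute \emph{exists}, which is the very thing to be shown. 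Strict dominance is not automatic either: in the pattern $d_1,d_2,d_1$ the sum already counts $d_1$ once, so the re-boarding solution has $\sum_{d}u_r^d=2$, and the tie-break penalizes nothing until you exhibit a feasible competitor with a strictly smaller sum. (A side correction: your motivating example --- waiting at a station on dummy links and re-boarding the same $d$ --- is not actually a counterexample, since no change of vehicle occurs there and the formula remains correct; the identity is threatened only when a \emph{different} real driver is interposed between the two segments.)

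The paper supplies the missing constructive step in one line, and it is the only real content of its proof: because each driver's $x$-variables satisfy the route constraints (\ref{eq:d_orig})--(\ref{eq:d_balance}), driver $d_1$'s itinerary is a single connected path in the time-expanded network, so if $d_1$ carries $r$ on $\ell_1$ (ending at node 2) and later on $\ell_3$ (starting at node 3), then $d_1$ itself must travel from node 2 to node 3, and the rider could simply have stayed on board. That substitute keeps $z_r$ unchanged, reduces $\sum_{d}u_r^d$ from 2 to 1, and hence is strictly preferred under the maximization objective, so the re-boarding configuration cannot be optimal (Figure \ref{fig:Undesireable_match}). You gesture at this fact descriptively (``after $d$'s route returns to a station'') but never deploy it in the argument; inserting the driver-connectivity observation as the source of the substitute itinerary is what closes your proof. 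One caveat you could fairly flag --- which the paper's ``without loss of generality'' also glosses over --- is that the substitution assumes $d_1$ has spare capacity on the intermediate segment (constraint (\ref{eq:capacity})); the rider's time budget, by contrast, is not an issue, since elapsed time along any path from node 2 to node 3 in the time-expanded network telescopes to $t_3-t_2$ regardless of the carrier.
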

\begin{proof}
If driver $d$ carries rider $r$ on any link, then $u_{r}^{d}=1$.
Therefore, the term $\sum_{d\in D}u_{r}^{d}-1$ provides the number of connections, only
if a driver does not pick up a rider multiple times. Without loss
of generality, we use the example in Figure \ref{fig:Undesireable_match}
to show by contradiction that such a scenario cannot happen. Figure
\ref{fig:Undesireable_match} shows a rider's itinerary.
On the first and third links, the rider is traveling with $d_{1}$,
and on the second link, he/she is traveling with $d_{2}$. If $d_{1}$
travels on both $\ell_{1}$ and $\ell_{3}$, at some point $d_{1}$
must have traveled from node 2 to node 3, and the rider could have
accompanied him on that ride too, reducing the number of transfers
from 2 to 0. The term $-W_{r}\sum u_{r}^{d}$ in the maximization
objective function ensures that the route with zero transfers is
selected as the optimal solution. Alternatively, one could leave out
this term from the objective function, and use post-processing to
refine the solution. 
\end{proof}
\begin{figure}[h]
\begin{centering}
\includegraphics[width=2.9in]{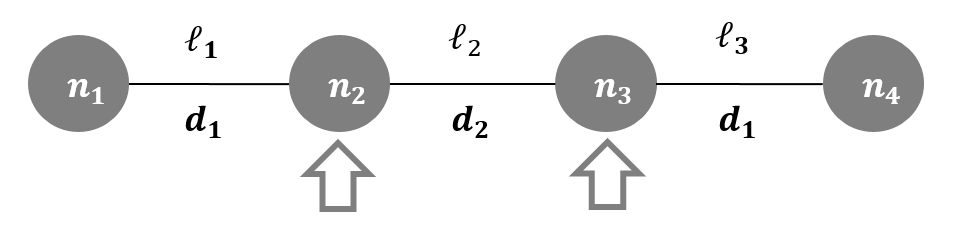}
\par\end{centering}

\caption{An undesirable match\label{fig:Undesireable_match}}
\end{figure}

\begin{prop}
(\ref{eq:ident_driver_1}) and (\ref{eq:ident_driver_2}) register
all drivers who collectively form each rider\textquoteright s itinerary. \end{prop}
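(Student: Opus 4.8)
The plan is to prove the logical equivalence that, for every rider $r$ and driver $d$, the binary variable $u_r^d$ equals $1$ precisely when $d$ carries $r$ on at least one link, i.e.\ when $\exists\,\ell\in L$ with $y_\ell^{rd}=1$. By the definition in (\ref{eq:Var_U}), this equivalence is exactly the assertion that the two constraint sets jointly ``register'' every driver who contributes to $r$'s itinerary. I would establish the biconditional by splitting into its two implications, in each case leaning on the binary restriction imposed on all decision variables.

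First I would treat the case where $d$ does contribute, so that $y_{\ell_0}^{rd}=1$ for some link $\ell_0$. Instantiating (\ref{eq:ident_driver_1}) at $\ell=\ell_0$ gives $u_r^d\ge y_{\ell_0}^{rd}=1$, and since $u_r^d$ is binary this forces $u_r^d=1$. Conversely, I would treat the case where $d$ never carries $r$, so $y_\ell^{rd}=0$ for all $\ell\in L$. Then $\sum_{\ell\in L}y_\ell^{rd}=0$, and (\ref{eq:ident_driver_2}) yields $u_r^d\le 0$; non-negativity of the binary variable then forces $u_r^d=0$. Combining the two cases shows $u_r^d=1$ iff $d$ appears on some link of $r$'s route, which is precisely the registration property claimed.

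The one point that needs care — and the only place the argument is more than a one-line substitution — is the situation in which $r$ rides with the same driver on several links, so that $\sum_{\ell\in L}y_\ell^{rd}\ge 2$. Here (\ref{eq:ident_driver_2}) only gives the loose bound $u_r^d\le\sum_{\ell\in L}y_\ell^{rd}$, which by itself does not pin $u_r^d$ down; it is the integrality of $u_r^d$ (capping it at $1$) together with the lower bound from (\ref{eq:ident_driver_1}) that yields $u_r^d=1$ rather than some larger count. I would therefore make explicit that both constraint sets are genuinely needed: (\ref{eq:ident_driver_1}) pushes $u_r^d$ up to $1$ whenever $d$ is used, while (\ref{eq:ident_driver_2}) pushes it down to $0$ whenever $d$ is unused. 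I would close by noting that the $-W_r\sum_{d}u_r^d$ term in the objective (\ref{eq:Obj}) is consistent with this, since minimizing the $u_r^d$ never conflicts with the value forced to be $1$; hence the registration is exact at optimality and the count $\sum_{d}u_r^d$ feeding Proposition~1 is well defined.
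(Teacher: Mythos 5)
Your proof is correct and follows essentially the same argument as the paper's: constraint (\ref{eq:ident_driver_1}) forces $u_r^d=1$ whenever some $y_\ell^{rd}=1$, and constraint (\ref{eq:ident_driver_2}) forces $u_r^d$ down to $0$ when all $y_\ell^{rd}=0$. Your added remarks on integrality capping $u_r^d$ at $1$ when $\sum_\ell y_\ell^{rd}\ge 2$ and on the objective term are sound elaborations, though strictly unnecessary since the two constraints together with the binary restriction already determine $u_r^d$ completely.
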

\begin{proof}
From (\ref{eq:ident_driver_1}) we have $u_{r}^{d}\;\ge\;y_{\ell}^{rd}$.
If $y_{\ell}^{rd}=1$, $u_{r}^{d}$ is forced to be 1. If $y_{\ell}^{rd}=0$,
$u_{r}^{d}$ can take either 0 or 1. Constraint set (\ref{eq:ident_driver_2})
ensures that in the latter case, $u_{r}^{d}$ takes its lower bound. 
\end{proof}

\section{Decomposition Algorithm\label{sec:Decomposition-Algorithm}}

We start the algorithm by setting the iteration counter, $i$, to
1 (line 2), and the sub-problem counter, $k$, to 0 (line 3). Each rider forms a new sub-problem in the first iteration of the algorithm, and the type of all sub-problems is set as active (lines 4-9). In Algorithm 2, $R_{i}^{k}$ denotes the set of riders in sub-problem $k$ of iteration $i$, and $SP_{i}$ denotes the set of sub-problems in iteration $i$. $Active(k)$
indicates whether the type of sub-problem $k$ is active (1) or not (0). 

The second step of the algorithm involves solving the set of active sub-problems in the current iteration. We keep in set $D_{i}^{k}$ the list of all matched drivers in sub-problem $k$ of iteration $i$ (line 13), and in set $D_{i}$ the list of all drivers who have been matched in iteration $i$ (line 14). Furthermore, we use sets $Y_{i,k},$ $X_{i,k}^{d}$, and $Z_{i,k}^{r}$
to keep track of the itineraries of all riders, driver $d$ ($\forall d\in D)$,
and whether rider $r$ $(\forall r\in R)$ has been matched in sub-problem
$k$ of iteration $i$, respectively (lines 15-17).

In step 3 we check the stopping criterion of the algorithm. We start by setting the value of the indicator variable $Terminate(d)$ to zero (line 21). If driver $d$ is shown to have the same itinerary under all sub-problems, the value of the indicator variable $Terminate(d)$ will be set to 1 (lines 22-24). If the value of $Terminate(d)$ becomes 1 for all drivers, then the union of solutions to the sub-problems will yield the optimal solution to the original optimization problem $(X^{*})$ (lines 26-29).\\
\\
\begin{tabular}{l}
\hline 
\textbf{Algorithm 2} The decomposition algorithm\tabularnewline
\hline 
\textbf{}%
\begin{minipage}[t]{1\columnwidth}%
$01\quad$ \textbf{Step 1. Initialize}

$02\quad$ $i\leftarrow 1$

$03\quad$ $k\leftarrow 0$

$04\quad$ For each rider $r \in R$

$05\quad$ $\qquad k\leftarrow k+1.$

$06\quad$ $\qquad$Set $R_{i}^{k}=\{r\}.$

$07\quad$ $\qquad$Set $Active(k)\leftarrow1.$

$08\quad$ End For

$09\quad$ Set $SP_{i}=\{1,...,|R|\}.$

$10\quad$ \textbf{Step 2. Solve the (active) sub-problems}

$11\quad$ For each sub-problem $k\in SP_{i}$ for which $Active(k)=1$

$12\quad$ $\qquad$Solve sub-problem $k$.

$13\quad$ $\qquad$Let $D_{i}^{k}=\{d\in D\;|\;\sum_{r\in R_{i}^{k},\ell\in L_{rd}}y{}_{\ell}^{rd*}\geq1\}.$

$14\quad$ $\qquad$Let $D_{i}=\bigcup_{k\in SP_{i}}D_{i}^{k}.$

$15\quad$ $\qquad$Let $Y_{i,k}=\big(y{}_{\ell}^{rd*}\big)_{r\in R_{i}^{k},d\in D_{i}^{k},\ell\in L_{rd}}.$ 

$16\quad$ $\qquad$Let $X_{i,k}^{d}=\big(x_{\ell}^{d*}\big)_{\ell\in L_{d}},\;\forall d\in D_{i}^{k}.$

$17\quad$ $\qquad$Let $Z_{i,k}^{r}=z_{r}^{*},\;\forall r\in R_{i}^{k}.$

$18\quad$ End For

$19\quad$ \textbf{Step 3. Termination Criterion}

$20\quad$ For each driver $d\in D_{i}$

$21\quad \qquad Terminate(d) \leftarrow0$.

$22\quad$ $\qquad$If $X_{i,m}^{d}=X_{i,n}^{d}$ for all $(m,n)\in D_{i}^{m}\times D_{i}^{n}$

$23\quad$ $\qquad\qquad Terminate(d)\leftarrow1.$

$24\quad$ $\qquad$End If

$25\quad$ End For

$26\quad$ If $\sum_{d\in D_{i}}Terminate(d)=|D|$ then 

$27\quad \qquad$ Stop. 

$28\quad$ Final optimal solution $X^{*}=U_{k\in SP_{i}}\big(Y_{i,j},\cup_{d\in D}X_{i,k}^{d},\cup_{r\in R}Z_{i,k}^{r}\big)$.

$29\quad$ Else 

$30\quad$ $\qquad$Continue to Step 4.

$31\quad$ End If

\end{minipage}\tabularnewline
\hline 
\end{tabular}\\
\\
If the algorithm does not terminate in step 3, we move forward to
step 4 to form the set of sub-problems for the next iteration (line 33). We
start step 4 by defining set $R^{\prime}$ to include all riders that
need to be allocated to sub-problems in the new iteration, and initialize
it with $R$ (line 35). Next, we go through the list of matched drivers in the
previous iteration (in no particular order). For each driver, we find
the riders in set $R^{\prime}$ that have the driver on their itinerary.
These riders (if belonging to different sub-problems) will from a
new sub-problem in the current iteration (line 42), and are removed from set
$R^{\prime}$ (line 43). We mark all such sub-problems $k$ as applicable by
setting the value of indicator $Applicable(k)$ to 1 (line 43).

At this point, there could still be riders that do not belong to any
sub-problems in the new iteration. Next, we go through the sub-problems
from the previous iteration one by one (lines 47-58). For a given sub-problem $j$
from the previous iteration, we find all riders $r$ who do not belong
to any sub-problems in the current iteration (i.e., $r\in R^{\prime}$).
If such riders exist, they will from a new sub-problem in the current
iteration. If it turns out that we are importing an entire sub-problem
from the previous iteration to the current iteration, then this sub-problem
does not need to be solved again, and so we set the applicability
indicator of the sub-problem to zero (lines 49-51). Otherwise, if we are importing
only a part of a sub-problem from the previous iteration as a new
sub-problem to the current iteration, then the solution from the previous
iteration is not necessarily optimal anymore, and so we set the applicability
indicator of the new sub-problem to 1 (lines 53-55). We next move to Step 5.\\
\\
\begin{tabular}{l}
\hline 
\textbf{Algorithm 2 (continued)} The decomposition algorithm\tabularnewline
\hline 
\textbf{}%
\begin{minipage}[t]{1\columnwidth}%
$32\quad$ \textbf{Step 4. Form the set of sub-problems for the next iteration}

$33\quad$ $i\leftarrow i+1$

$34\quad$ $k\leftarrow0$

$35\quad$ Let $R^{\prime}=R$.

$36\quad$ For each driver $d\in D_{i-1}$ such that there are at least two sub-problems
$(m,n)\in SP_{i-1} \times SP_{i-1}$ 

$37\quad$  with $d\in D_{i-1}^{m}$ and $d\in D_{i-1}^{n}$

$38\quad$ $\qquad$Let $R_{temp}=$ set of all riders $r\in R^{\prime}$ that
have driver $d$ on their optimal path 

$39\quad$ $\qquad$in any sub-problems $k\in SP_{i-1}$.

$40\quad$ $\qquad$If $R_{temp}\not=\emptyset$ then 

$41\quad$ $\qquad\qquad$Form a new sub-problem: $k\leftarrow k+1$. 

$42\quad$ $\qquad\qquad$Set $R_{i}^{k}=R_{temp}$. 

$43\quad$ $\qquad\qquad$Update $R^{\prime}\leftarrow R^{\prime}\backslash R_{temp}$. 

$44\quad$ $\qquad\qquad$Set $Applicable(k)\leftarrow1$. 

$45\quad$ $\qquad$End If

$46\quad$ End For

$47\quad$ For $j\in SP_{i-1}$

$48\quad$ $\qquad$If $|R_{i-1}^{j}\cap R^{\prime}|=|R_{i-1}^{j}|$ then 

$49\quad$ $\qquad\qquad$Form a new sub-problem: $k\leftarrow k+1$. 

$50\quad$ $\qquad\qquad$Set $R_{i}^{k}=R_{i-1}^{j}$. 

$51\quad$ $\qquad\qquad$Set $Applicable(k)\leftarrow0$.

$52\quad$ $\qquad$Elseif $R_{i-1}^{j}\cap R^{\prime}\neq\emptyset$ and $|R_{i-1}^{j}\cap R^{\prime}|<|R_{i-1}^{j}|$ 

$53\quad$ $\qquad\qquad$Form a new sub-problem: $k\leftarrow k+1$. 

$54\quad$ $\qquad\qquad$Set $R_{i}^{k}=R_{i-1}^{j}\cap R^{\prime}$. 

$55\quad$ $\qquad\qquad$Set $Applicable(k)\leftarrow1$.

$56\quad$ $\qquad$End If

$57\quad$ End For

$ $%
\end{minipage}\tabularnewline
\hline 
\end{tabular}\\
\\
In step 5, we prevent the algorithm from looping between iterations.
After forming the set of sub-problems for the current iteration, we
compare these sub-problems with sub-problems from previous iterations.
If it turns out that two iterations have the exact same set of sub-problems,
we form a new intermediate sub-problem in the current iteration by
finding two sub-problems from the previous iteration such that each
of these two sub-problems includes a subset of riders in a sub-problem
in the current iteration (lines 62-63). These two sub-problems form a new sub-problem
in the current iteration (line 64). Naturally, riders in this newly formed sub-problem
are eliminated from their original sub-problems (line 65), and the newly formed
intermediate sub-problem is marked as applicable (line 66).

The last step of the algorithm is to find the active sub-problems:
those whose solutions cannot be derived from the solutions of previously
solved sup-problems. There are two cases where sub-problems are not
active: first, if the exact same sub-problem has been solved before,
in which case the solution is readily available (lines 78-82); second, if a sub-problem
$k$ is a union of a set of sub-problems from a previous iteration,
and the solutions to these sub-problems do not have any conflicts
in terms of itineraries of drivers. In this case, the solution to
sub-problem $k$ would be the union of the solutions to this set of
sub-problems (lines 87-90). Next, we go back to Step 2, where we solve the active sub-problems (line 96).\\
\\
\begin{tabular}{l}
\hline 
\textbf{Algorithm 2 (continued)} The decomposition algorithm\tabularnewline
\hline 
\begin{minipage}[t]{1\columnwidth}%
$58\quad$ \textbf{Step 5. Check for repeating patterns and form intermediate
sub-problems}

$59\quad$ For $j\in1$ to $i-2$

$60\quad$ $\qquad$If iterations $i$ and $j$ have the same set of sub-problems,
then

$61\quad$ $\qquad\qquad$For sub-problem $k\in SP_{i}$ 

$62\quad$ $\qquad$$\qquad\qquad$Find two sub-problems $m,n\in SP_{i-1}$ such
that they each share 

$63\quad$ $\qquad\qquad\qquad$at least one rider with $R_{i}^{k}$.

$64\quad$ $\qquad$$\qquad\qquad$Add a sub-problem $k$ to $SP_{i}$ with combined
set of riders $R_{i}^{m}\cup R_{i}^{n}$.

$65\quad$ $\qquad\qquad\qquad$Delete all riders in $R_{i}^{m}\cup R_{i}^{n}$
from sub-problems in $SP_{i}$.

$66\quad$ $\qquad\qquad\qquad$Set $Applicable(k)\leftarrow1.$

$67\quad$ $\qquad$$\qquad\qquad$Exit For loop.

$68\quad$ $\qquad\qquad$End For

$69\quad$ $\qquad$End If

$70\quad$ End For

$71\quad$ \textbf{Step 6. Identify active sub-problems}

$72\quad$ For each sub-problem $k\in SP_{i}$ such that $Applicable(k)=1$

$73\quad$ $\qquad$Set $Active(k)\leftarrow1$

$74\quad$ $\qquad$For each iteration $j=1$ to $i-1$

$75\quad$ $\qquad\qquad$For each sub-problem $n\in SP_{j}$

$76\quad$ $\qquad\qquad\qquad$While $Active(k)=1$

$77\quad$ $\qquad\qquad\qquad\qquad$If $R_{i}^{k}=R_{j}^{n}$ then

$78\quad$ $\qquad\qquad\qquad\qquad\qquad$The solution to $k$ is already available: 

$79\quad$ $\qquad\qquad\qquad\qquad\qquad$Set $Y_{i,k}=Y_{j,n}$. 

$80\quad$ $\qquad\qquad\qquad\qquad\qquad$Set $X_{i,k}^{d}=X_{j,n}^{d}$$\;\forall d\in D_{j}^{n}.$ 

$81\quad$ $\qquad\qquad\qquad\qquad\qquad$Set $Z_{i,k}^{r}=Z_{j,n}^{r}\;\forall r\in R_{n}^{j}$. 

$82\quad$ $\qquad\qquad\qquad\qquad\qquad$Set $Active(k)\leftarrow0$.

$83\quad$ $\qquad\qquad\qquad\qquad$Elseif there exist a set of sub-problems
$N\in SP_{j}$ such that

$84\quad$$\qquad\qquad\qquad\qquad$$R_{i}^{k}=\cup_{n\in N}R_{j}^{n}$, and
the solutions to sub-problems in set $N$ 

$85\quad$ $\qquad\qquad\qquad\qquad$have no conflict (i.e., $\forall m,n\in N$
and $\forall d\in D_{j}^{m}\cap D_{j}^{n}\;:\;X_{j,m}^{d}=X_{j,n}^{d}$
) 

$86\quad$ $\qquad\qquad\qquad\qquad\qquad$The solution to $k$ is the union
of the solutions to $n\in N$:

$87\quad$ $\qquad\qquad\qquad\qquad\qquad$Set $Y_{i,k}=\big(Y_{j,n}\big)_{n\in N}$. 

$88\quad$ $\qquad\qquad\qquad\qquad\qquad$Set $X_{i,k}^{d}=X_{j,n}^{d}$$\;\forall d\in\bigcup_{n\in N}D_{j}^{n}.$ 

$89\quad$ $\qquad\qquad\qquad\qquad\qquad$Set $Z_{i,k}^{r}=Z_{j,n}^{r}\;\forall r\in\bigcup_{n\in N}R_{n}^{j}$. 

$90\quad$ $\qquad\qquad\qquad\qquad\qquad$Set $Active(k)\leftarrow0$.

$91\quad$ $\qquad\qquad\qquad\qquad$End If

$92\quad$ $\qquad\qquad\qquad$End While

$93\quad$ $\qquad\qquad$End For

$94\quad$ $\qquad$End For

$95\quad$ End For

$96\quad$ Go to Step 2

$ $%
\end{minipage}\tabularnewline
\hline 
\end{tabular}
\\
\newpage

Below we have presented a summary of Algorithm 2, and have elaborated on how one should move between steps of this algorithm. In addition to the summary below, the algorithm flowchart in Figure \ref{fig:The-decomposition-algorithm} showcases the connections between different steps of the algorithm.
\begin{mdframed}

\textbf{Step 1}: Set the number of iterations to 1. Initialize the decomposition algorithm by forming the first iteration's set of sub-problems (i.e., one sub-problem for each rider.) Set the state of all sub-problems as active. Go to \textbf{Step 2}.\\

\noindent\textbf{Step 2}: Solve the active sub-problems. Go to \textbf{Step 3}.\\

\noindent\textbf{Step 3}: Check the termination Criterion. If the criterion is satisfied, then \textbf{stop}. Otherwise, go to \textbf{Step 4}.\\

\noindent\textbf{Step 4}: Increment the number of iterations by 1. For this new iteration construct the set of sub-problems. Go to \textbf{Step 5}.\\

\noindent\textbf{Step 5}: Look for repeating patterns. If such patterns exist, adjust the iteration's sub-problems by creating intermediate sub-problems. Go to \textbf{Step 6}. \\

\noindent\textbf{Step 6}: Identify active sub-problems in the current iteration. Go to \textbf{Step 2}.

\end{mdframed}


\section{Revised version of the P2P multi-hop matching problem\label{sec:Revised-version-optimization}}

Each sub-problem $k$ in iteration $i$ of the decomposition algorithm
needs to be solved using the optimization problem in model (\ref{equ:determinstic-1}).
In the interest of simplicity of notation, let us denote $R_{i}^{k}$
and $D{}_{i}^{k}$, i.e., the rider and drivers sets in sub-problem
$k$ in iteration $i$, by $R_{\kappa}$ and $D_{\kappa}$, respectively.

\begin{subequations}\label{equ:determinstic-1}

\begin{align}
\mbox{Max}\quad & \;\;\sum_{r\in R_{\kappa}}z_{r}\:\:\:-\sum_{r\in R_{\kappa}}W_{r}\sum_{d\in D_{\kappa}:(r,d)\in M}u_{r}^{d}\label{eq:Obj-1}\\
\mbox{Subject to:}\quad & \sum_{^{_{s_{i}=OS_{d}}^{\:\:\:\ell\in L_{d}:}}}x_{\ell}^{d}\:\:\:\:-\sum_{^{_{s_{j}=OS_{d}}^{\:\;\:\ell\in L_{d}:}}}x_{\ell}^{d}=1 &  & \forall d\in D_{\kappa}\label{eq:d_orig-1}\\
 & \sum_{^{_{s_{j}=DS_{d}}^{\:\:\:\ell\in L_{d}:}}}x_{\ell}^{d}\:\:\:\:-\sum_{^{_{s_{i}=DS_{d}}^{\:\:\:\ell\in L_{d}:}}}x_{\ell}^{d}=1 &  & \forall d\in D_{\kappa}\label{eq:d_dest-1}\\
 & \!\!\!\!\!\!\!\sum_{_{\ell=(t_{i},s_{i},t,s)\in L_{d}}^{\qquad t_{i},s_{i}}}\!\!\!\!\!x_{\ell}^{d}\:\:\:=\!\!\!\!\!\!\sum_{_{\ell=(t,s,t_{j},s_{j})\in L_{d}}^{\qquad t_{j},s_{j}}}\!\!\!\!\!x_{\ell}^{d} &  & \!\!\!\begin{array}{l}
\forall d\in D_{\kappa}\\
\forall t\in T_{d}\\
\forall s\in G_{d}\backslash\{OS_{d}\cup DS_{d}\}
\end{array}\label{eq:d_balance-1}\\
 & \!\!\!\sum_{^{_{\quad(r,d)\in M}^{\quad d\in D_{\kappa}^{\prime}:}}}\!\!\!\sum_{^{_{\quad s_{i}=OS_{r}}^{\quad\ell\in L_{rd}:}}}\!\!\!\!\!\!y_{\ell}^{rd}-\sum_{^{_{\quad(r,d)\in M}^{\quad d\in D_{\kappa}^{\prime}:}}}\!\!\!\sum_{^{_{\quad s_{j}=OS_{r}}^{\quad\ell\in L_{rd}:}}}\!\!\!\!\!\!y_{\ell}^{rd}=z_{r} &  & \forall r\in R_{\kappa}\label{eq:r_orig-1}\\
 & \!\!\!\sum_{^{_{\quad(r,d)\in M}^{\quad d\in D_{\kappa}^{\prime}:}}}\!\!\!\sum_{^{_{\quad s_{j}=DS_{r}}^{\quad\ell\in L_{rd}:}}}\!\!\!\!\!\!y_{\ell}^{rd}-\sum_{^{_{\quad(r,d)\in M}^{\quad d\in D_{\kappa}^{\prime}:}}}\!\!\!\sum_{^{_{\quad s_{i}=DS_{r}}^{\quad\ell\in L_{rd}:}}}\!\!\!\!\!\!y_{\ell}^{rd}=z_{r} &  & \forall r\in R_{\kappa}\label{eq:r_dest-1}\\
 & \!\!\!\sum_{^{_{\quad(r,d)\in M}^{\quad d\in D_{\kappa}^{\prime}:}}}\!\!\!\sum_{^{\:\:\:_{\ell=(t_{i},s_{i},t,s)\in L}^{\:\:\:\:\:\:\:\:\:t_{i},s_{i}:}}}\!\!\!\!\!y_{\ell}^{rd}=\!\!\!\sum_{^{_{\quad(r,d)\in M}^{\quad d\in D_{\kappa}^{\prime}:}}}\!\!\!\sum_{^{_{\:\:\:\ell=(t,s,t_{j},s_{j})\in L}^{\:\:\:\:\:\:\:\:\:\:\:\:\:t_{j},s_{j}:}}}\!\!\!\!\!y_{\ell}^{rd} &  & \!\!\!\begin{array}{l}
\forall r\in R_{\kappa}\\
\forall t\in T_{r}\\
\forall s\in G_{r}\backslash\{OS_{r}\cup DS_{r}\}
\end{array}\label{eq:r_balance-1}\\
 & \!\!\!\!\!\!\!\!\!\!\!\!\sum_{^{_{\quad\quad(r,d)\in M,\ell\in L_{rd}}^{\quad\quad\:\:\:\:r\in R_{\kappa}:}}}\!\!\!\!\!\!\!\!\!y_{\ell}^{rd}\leq C_{d}x_{\ell}^{d} &  & \!\!\!\begin{array}{l}
\forall d\in D_{\kappa}\\
\forall\ell\in L_{d}
\end{array}\label{eq:capacity-1}\\
 & \;\;u_{r}^{d}\geq y_{\ell}^{rd} &  & \!\!\!\begin{array}{l}
\begin{array}{l}
\forall(r,d)\in M\\
\forall\ell\in L_{rd}
\end{array}\end{array}\label{eq:ident_driver_1-1}\\
 & \;\;u_{r}^{d}\leq\sum_{\ell\in L_{rd}}y_{\ell}^{rd} &  & \begin{array}{l}
\forall(r,d)\in M\end{array}\label{eq:ident_driver_2-1}\\
 & \!\!\!\sum_{^{_{\quad(r,d)\in M}^{\quad d\in D_{\kappa}^{\prime}:}}}y_{\ell}^{rd}-1\leq V_{r} &  & \:\:\forall r\in R_{\kappa}\label{eq:transfer-1}
\end{align}

\end{subequations}

\end{document}